\documentclass[3p,11pt]{elsarticle}

\newcommand{\red}[1]{{#1}\xspace}
\newcommand{\markRed}{}

%-----------------------------------------------------------------------------

\usepackage{latexsym}
\usepackage{xspace}
\usepackage{amssymb}
\usepackage{stmaryrd}
\usepackage{hyperref}
\usepackage{url}
\usepackage{graphicx}
\input{xy}
\xyoption{all}
\usepackage{calc}
\usepackage{tikz}
\usepackage{flushend}
\usepackage{color}

%------------------------------------------------------------- 

\newtheorem{theorem}{Theorem}[section]
\newtheorem{lemma}[theorem]{Lemma}
\newtheorem{proposition}[theorem]{Proposition}
\newtheorem{corollary}[theorem]{Corollary}

\newtheorem{Definition}[theorem]{Definition}
\newtheorem{Example}[theorem]{Example}
\newtheorem{Remark}[theorem]{Remark}

\newenvironment{definition}{\begin{Definition}\begin{em}}{\end{em}\end{Definition}}
\newenvironment{example}{\begin{Example}\begin{em}}{\end{em}\end{Example}}
\newenvironment{remark}{\begin{Remark}\begin{em}}{\end{em}\end{Remark}}
\newproof{proof}{Proof}

%-----------------------------------------------------------------------------

\def\eqref#1{(\ref{#1})}

\def\tuple#1{\langle#1\rangle}

\newcommand{\E}{\exists}

%\def\lappr#1#2{#1^{\mbox{\tiny$^+$}}_{#2}}
%\def\uappr#1#2{#1^{\mbox{\tiny$^\oplus$}}_{#2}}

%--------------------------------------------------------------

\newcommand{\mL}{\mathcal{L}}

%--------------------------------------------------------------

%--------------------------------------------------------------

\newcommand{\mA}{\mathcal{A}}
\newcommand{\mAp}{{\mathcal{A}'\!}}

\newcommand{\mZ}{\mathcal{Z}}

\newcommand{\mM}{\mathcal{M}}
\newcommand{\mMp}{{\mathcal{M}'\!}}

\newcommand{\NN}{\mathbb{N}}

%\newcommand{\Ref}{\mathtt{Ref}}

%-------------------------------------------------------------

%\newcommand{\dotsim}{\stackrel{.}{\sim}}
%\newcommand{\dotsimI}{\stackrel{.}{\sim}_\mI}

%-------------------------------------------------------------

%\def\trojkat{\mbox{{\scriptsize$\!\vartriangleleft$}}}
%\newcommand{\myend}{\mbox{}\hfill\trojkat}
\newcommand{\myend}{\mbox{}\hfill{\scriptsize$\blacksquare$}}
\newcommand{\Myend}{\mbox{}\hfill{_\blacksquare}}

\newcommand{\comment}[1]{}

\newcommand{\fand}{\varotimes}

\newcommand{\fto}{\Rightarrow}
\newcommand{\fequiv}{\Leftrightarrow}

\newcommand{\FLTS}{FLTS\xspace}
\newcommand{\FLTSs}{FLTSs\xspace}

\newcommand{\SP}{\Sigma_P}
\newcommand{\SA}{\Sigma_A}

\newcommand{\fPDLwP}{$\mathit{fPDL}^{-\Phi}$\xspace}
\newcommand{\fPDL}{$\mathit{fPDL}$\xspace}
\newcommand{\fKz}{$\mathit{fK}^0$\xspace}

%-----------------------------------------------------------------------------

\journal{arXiv}

\begin{document}
\sloppy
	
\begin{frontmatter}
		
\title{Logical Characterizations of Fuzzy Bisimulations in Fuzzy Modal Logics over Residuated Lattices}

\author{Linh Anh Nguyen}
\ead{nguyen@mimuw.edu.pl}

\address{
	Institute of Informatics, University of Warsaw, 
	Banacha 2, 02-097 Warsaw, Poland 
}

\address{
	Faculty of Information Technology, Nguyen Tat Thanh University, Ho Chi Minh City, Vietnam
}

\begin{abstract}
There are two kinds of bisimulation, namely {\em crisp} and {\em fuzzy}, between fuzzy structures such as fuzzy automata, fuzzy labeled transition systems, fuzzy Kripke models and fuzzy interpretations in description logics. Fuzzy bisimulations between fuzzy automata over a complete residuated lattice have been introduced by {\'C}iri{\'c} {\em et al}.~in 2012. Logical characterizations of fuzzy bisimulations between fuzzy Kripke models (respectively, fuzzy interpretations in description logics) over the residuated lattice $[0,1]$ with the G\"odel t-norm have been provided by Fan in~2015 (respectively, Nguyen {\em et al}.~in 2020). There was the lack of logical characterizations of fuzzy bisimulations between fuzzy graph-based structures over a general residuated lattice, as well as over the residuated lattice $[0,1]$ with the {\L}ukasiewicz or product t-norm. In this article, we provide and prove logical characterizations of fuzzy bisimulations in fuzzy modal logics over residuated lattices. The considered logics are the fuzzy propositional dynamic logic and its fragments. Our logical characterizations concern invariance of formulas under fuzzy bisimulations and the Hennessy-Milner property of fuzzy bisimulations. They can be reformulated for other fuzzy structures such as fuzzy labeled transition systems and fuzzy interpretations in description logics. 
\end{abstract}

\begin{keyword}
bisimulation \sep fuzzy bisimulation \sep fuzzy modal logic \sep residuated lattice
\end{keyword}

\end{frontmatter}

%===============================================================================

\section{Introduction}
\label{section:intro}

Bisimulation is a useful notion for characterizing equivalence of states in transition systems~\cite{Park81,HennessyM85}. It has been extensively studied in modal logic for characterizing logical indiscernibility of states and separating the expressive power of modal logics (see, e.g., \cite{vBenthem84,BRV2001,GorankoOtto06}). Bisimulation can be used for minimizing structures. It can also be exploited for studying indiscernibility of individuals and concept learning in description logics~\cite{LbRoughification}. 

To deal with vagueness, fuzzy structures are used instead of crisp ones. There are two kinds of bisimulation, namely {\em crisp} and {\em fuzzy}, between fuzzy structures such as fuzzy automata, fuzzy transition systems, fuzzy Kripke models and fuzzy interpretations in description logics. 
Researchers have studied crisp bisimulations for fuzzy transition systems~\cite{CaoCK11,CaoSWC13,DBLP:journals/fss/WuD16,DBLP:journals/ijar/WuCHC18,DBLP:journals/fss/WuCBD18}, weighted automata~\cite{DamljanovicCI14}, Heyting-valued modal logics~\cite{EleftheriouKN12}, G\"odel modal logics~\cite{Fan15} and fuzzy description logics~\cite{FSS2020}. 
They have also studied fuzzy bisimulations for fuzzy automata~\cite{CiricIDB12,CiricIJD12}, weighted/fuzzy social networks~\cite{ai/FanL14,IgnjatovicCS15}, G\"odel modal logics~\cite{Fan15} and fuzzy description logics~\cite{FSS2020,minimization-by-fBS,TFS2020}. 

This article is devoted to \red{study} logical characterizations of fuzzy bisimulations. 
In~\cite{Fan15} Fan introduced fuzzy bisimulations between fuzzy Kripke models over the lattice $[0,1]$ using the G\"odel semantics (i.e., the G\"odel t-norm and its residuum). She provided logical characterizations of such bisimulations in the basic fuzzy monomodal logic and its extension with converse. The results concern invariance of modal formulas under fuzzy bisimulations and the Hennessy-Milner property of fuzzy bisimulations. 
In \cite{ai/FanL14} Fan and Liau studied fuzzy bisimulations under the name ``regular equivalence relations'' for weighted social networks. They provided logical characterizations for such bisimulations under the G\"odel semantics, including invariance results and the Hennessy-Milner property. 
In~\cite{FSS2020} Nguyen {\em et al}.\ defined and studied fuzzy bisimulations for a large class of fuzzy description logics under the G{\"o}del semantics. The work~\cite{FSS2020} contains results on invariance of concepts under such bisimulations and the Hennessy-Milner property of such bisimulations. 
In~\cite{EleftheriouKN12} Eleftheriou {\em et al}.\ studied bisimulations for Heyting-valued modal logics. Bisimulations defined in~\cite{EleftheriouKN12} are crisp and cut-based (i.e., using fuzzy values as thresholds). As discussed by Fan~\cite{Fan15}, such bisimulations give another representation of fuzzy bisimulations. The work~\cite{EleftheriouKN12} contains results on logical characterizations of the studied bisimulations, including the Hennessy-Milner property.

Note that the results on fuzzy bisimulations of all the works~\cite{ai/FanL14,Fan15,FSS2020} are formulated and proved only for fuzzy structures over the lattice $[0,1]$ using the G\"odel semantics. The results of~\cite{EleftheriouKN12} concern only modal logics over Heyting algebras of truth values. Such algebras are residuated lattices that use $\fand = \land$, and therefore, are closely related to the G\"odel semantics. 
There was the lack of logical characterizations of fuzzy bisimulations between fuzzy graph-based structures over a general residuated lattice, as well as over the residuated lattice $[0,1]$ with the {\L}ukasiewicz or product t-norm. 

In this article, we provide and prove logical characterizations of fuzzy bisimulations in fuzzy modal logics over general residuated lattices. The considered logics are the fuzzy propositional dynamic logic and its fragments. Our logical characterizations concern invariance of formulas under fuzzy bisimulations and the Hennessy-Milner property of fuzzy bisimulations. 
Our results are significant from the theoretical point of view, as they solve \red{the problem stated in the last sentence of the above paragraph}. They would also have an impact on practical applications, e.g.\ for studying logical similarity of individuals and concept learning in fuzzy description logics, as they can be reformulated for other fuzzy structures such as fuzzy interpretations in description logics and fuzzy labeled transition systems, and moreover, residuated lattices cover the lattice $[0,1]$ with any t-norm, including the product and {\L}ukasiewicz t-norms. 

The rest of this article is structured as follows. Section~\ref{section: prel} contains preliminaries for this work. In Section~\ref{section: fbir}, we define fuzzy bisimulations between fuzzy Kripke models and provide some of their basic properties. Sections~\ref{section: ir} and~\ref{sec: HM-prop} contain our results on invariance of formulas under fuzzy bisimulations and the Hennessy-Milner property of fuzzy bisimulations, respectively. Section~\ref{sec: rw} contains a discussion on related work. Conclusions are given in Section~\ref{sec: conc}. The work also contains two appendices: the first one is the proof of a lemma, whereas the second one is a discussion on the relationship with fuzzy bisimulations between fuzzy automata~\cite{CiricIDB12}.

%===============================================================================

\section{Preliminaries}
\label{section: prel}

In this section, we recall definitions and properties of residuated lattices and fuzzy sets, then present the syntax and semantics of the fuzzy modal logics considered in this article, together with some related notions. 

\subsection{Residuated Lattices and Fuzzy Sets}

A {\em residuated lattice} \cite{Hajek1998,Belohlavek2002} is an algebra $\mL = \tuple{L, \leq, \fand, \fto, 0, 1}$ such that%\footnote{See also \url{https://en.wikipedia.org/wiki/Residuated_lattice}} 
\begin{itemize}
\item $\tuple{L, \leq, 0, 1}$ is a bounded lattice with the least element 0 and the greatest element 1,
\item $\tuple{L, \fand, 1}$ is a commutative monoid, 
\item $\fand$ and $\fto$ form an adjoint pair, which means that, for every $x, y, z \in L$,
\begin{equation}
x \fand y \leq z \ \ \textrm{iff}\ \ x \leq (y \fto z). \label{fop: GDJSK 00} 
\end{equation}
\end{itemize}

The expression \mbox{$y \fto z$} is called the {\em residual} of $z$ by $y$. Given a residuated lattice \mbox{$\mL = \tuple{L, \leq, \fand, \fto, 0, 1}$}, let $\land$ and $\lor$ denote the {\em join} and {\em meet} operators associated with the lattice. By $x \fequiv y$ we denote \mbox{$(x \fto y) \land (y \fto x)$}. We use the convention that $\fand$ and $\land$ bind stronger than $\lor$, which in turn binds stronger than $\fto$ and $\fequiv$. 

A residuated lattice $\mL = \tuple{L, \leq, \fand, \fto, 0, 1}$ is {\em complete} (resp.\ {\em linear}) if the bounded lattice \mbox{$\tuple{L, \leq, 0, 1}$} is complete (resp.\ linear). It is a {\em Heyting algebra} if $\fand$ is the same as~$\land$. 
%A {\em G\"odel algebra} is a Heyting algebra with the prelinearity condition \mbox{$(x \fto y) \lor (y \fto x) = 1$} (for all $x,y \in L$). It is known that each linear Heyting algebra is a G\"odel algebra. 

We will need the following lemma. Although most assertions of this lemma are well-known~\cite{Belohlavek2002,Hajek1998} and the proof of this lemma is straightforward, we present the proof in~\ref{appendix A} to make this article self-contained.  

\begin{lemma}[cf.~\cite{Belohlavek2002,Hajek1998}]\label{lemma: JHFJW}
Let $\mL = \tuple{L, \leq, \fand, \fto, 0, 1}$ be a residuated lattice.
The following properties hold for all $x,x',y,y',z \in L$:
\begin{eqnarray}
\!\!\!\!\!\!\!\!\!\! x \leq x' \textrm{ and } y \leq y' & \!\!\textrm{implies}\!\! & x \fand y \leq x' \fand y' \label{fop: GDJSK 10}\\
\!\!\!\!\!\!\!\!\!\! x' \leq x \textrm{ and } y \leq y' & \!\!\textrm{implies}\!\! & (x \fto y) \leq (x' \fto y') \label{fop: GDJSK 20}\\
\!\!\!\!\!\!\!\!\!\! x \leq y & \textrm{iff} & (x \fto y) = 1 \label{fop: GDJSK 30} \\
\!\!\!\!\!\!\!\!\!\! x \fand 0 & = & 0 \label{fop: GDJSK 40} \\
\!\!\!\!\!\!\!\!\!\! x \fand (y \lor z) & = & x \fand y \,\lor\, x \fand z \label{fop: GDJSK 50} \\
\!\!\!\!\!\!\!\!\!\! x \fand (x \fto y) & \leq & y \label{fop: GDJSK 60} \\
\!\!\!\!\!\!\!\!\!\! x \fand (y \fto z) & \leq & (x \fto y) \fto z \label{fop: GDJSK 70} \\
\!\!\!\!\!\!\!\!\!\! x \fand (y \fequiv z) & \leq & (x \fto y) \fto z \label{fop: GDJSK 70a} \\
\!\!\!\!\!\!\!\!\!\! x \fand (y \fequiv z) & \leq & y \,\fequiv\, x \fand z \label{fop: GDJSK 80} \\
\!\!\!\!\!\!\!\!\!\! x \fto (y \fto z) & = & y \fto (x \fto z) \label{fop: GDJSK 90} \\
\!\!\!\!\!\!\!\!\!\! x \fto (y \fto z) & \leq & x \fand y \,\fto\, z \label{fop: GDJSK 100} \\
\!\!\!\!\!\!\!\!\!\! x \fto (y \fequiv z) & \leq & x \fand y \,\fto\, z \label{fop: GDJSK 110} \\
\!\!\!\!\!\!\!\!\!\! (x \fto y) \fand (y \fto z) & \leq & x \fto z \label{fop: GDJSK 115} \\
\!\!\!\!\!\!\!\!\!\! (x \fequiv y) \fand (y \fequiv z) & \leq & x \fequiv z \label{fop: GDJSK 115a} \\
\!\!\!\!\!\!\!\!\!\! (x \fequiv x') \land (y \fequiv y') & \leq & x \land y \,\fequiv\, x' \land y' \label{fop: GDJSK 120} \\
\!\!\!\!\!\!\!\!\!\! (x \fequiv x') \land (y \fequiv y') & \leq & x \lor y \,\fequiv\, x' \lor y' \label{fop: GDJSK 130} \\
\!\!\!\!\!\!\!\!\!\! x \fequiv y & \leq & (z \fto x) \fequiv (z \fto y) \label{fop: GDJSK 140} \\
\!\!\!\!\!\!\!\!\!\! x \fequiv y & \leq & (x \fto z) \fequiv (y \fto z). \label{fop: GDJSK 150}
\end{eqnarray}
\comment{
If $\mL$ is complete, then the following property holds for all $x \in L$ and $Y \subseteq L$:
\begin{equation}
x \fand \sup Y = \sup \{x \fand y \mid y \in Y\}. \label{fop: GDJSK 155}
\end{equation}
}
In addition, if $\mL$ is a Heyting algebra, then the following properties hold for all $x,x',y,y',z \in L$:
\begin{eqnarray}
\!\!\!\!\!\!\!\! (x \fequiv x') \land (y \fequiv y') & \leq & (x \fto y) \fequiv (x' \fto y') \label{fop: GDJSK 160}\\
\!\!\!\!\!\!\!\! x \leq (y \fequiv z) & \!\textrm{implies}\! & x \fand y = x \fand z. \label{fop: GDJSK 170}
\end{eqnarray}
\end{lemma}

\begin{example}
Consider the case when $L$ is the unit interval $[0,1]$. 
The most well-known operators $\fand$ are the G\"odel, {\L}ukasiewicz and product t-norms. They are specified below together with their corresponding residua~($\fto$).
\begin{center}
	\begin{tabular}{|c|c|c|c|}
		\hline
		& G\"odel & {\L}ukasiewicz & Product \\
		\hline
		$x \fand y$ & $\min\{x,y\}$ & $\max\{0, x+y-1\}$ & $x \cdot y$ \\
		%\hline
		%$x \fOr y$ & $\max\{x,y\}$ & $\min\{1, x+y\}$ & $x + y - x \cdot y$ \\
		\hline
		$x \fto y$ 
		& 
		\(
		\left\{
		\!\!\!\begin{array}{ll}
		1 & \textrm{if $x \leq y$} \\ 
		y & \textrm{otherwise}
		\end{array}\!\!\!
		\right.
		\)
		& $\min\{1, 1 - x + y\}$
		& 
		\(
		\left\{
		\!\!\!\begin{array}{ll}
		1 & \textrm{if $x \leq y$} \\ 
		y/x & \textrm{otherwise}
		\end{array}\!\!\!
		\right.
		\)	
		\\ \hline
	\end{tabular}
\end{center}
Note that, for all of these cases of~$\fand$, the considered residuated lattice is linear and complete, and the operator $\fand$ is continuous.
\myend
\end{example}

From now on, let $\mL = \tuple{L, \leq, \fand, \fto, 0, 1}$ be an arbitrary residuated lattice. 

Given a set $X$, a function $f: X \to L$ is called a {\em fuzzy set}, as well as a {\em fuzzy subset} of $X$. 
If $f$ is a fuzzy subset of $X$ and $x \in X$, then $f(x)$ means the fuzzy degree \red{in which} $x$ belongs to the subset. 
For $\{x_1,\ldots,x_n\} \subseteq X$ and $\{a_1,\ldots,a_n\} \subseteq L$, we write $\{x_1:a_1$, \ldots, $x_n:a_n\}$ to denote the fuzzy subset $f$ of $X$ such that $f(x_i) = a_i$ for $1 \leq i \leq n$ and $f(x) = 0$ for $x \in X \setminus \{x_1,\ldots,x_n\}$. 
Given fuzzy subsets $f$ and $g$ of $X$, we write $f \leq g$ to denote that $f(x) \leq g(x)$ for all $x \in X$. If $f \leq g$, then we say that $g$ is greater than or equal to~$f$.

A fuzzy subset of $X \times Y$ is called a {\em fuzzy relation} between $X$ and $Y$. A fuzzy relation between $X$ and itself is called a fuzzy relation on~$X$. 

Given $Z: X \times Y \to L$, the converse \mbox{$Z^- : Y \times X \to L$} of $Z$ is defined by $Z^-(y,x) = Z(x,y)$.

If the underlying residuated lattice is complete, then the {\em composition} of fuzzy relations \mbox{$Z_1: X \times Y \to L$} and \mbox{$Z_2: Y \times Z \to L$}, denoted by $Z_1 \circ Z_2$, is defined to be the fuzzy relation between $X$ and $Z$ such that $(Z_1 \circ Z_2)(x,z) = \sup\{Z_1(x,y) \fand Z_2(y,z) \mid y \in Y\}$ for all $\tuple{x,z} \in X \times Z$. 

Let $\mZ$ be a set of fuzzy relations between $X$ and $Y$. If $\mZ$ is finite or the underlying residuated lattice is complete, then by $\sup\mZ$ we denote the fuzzy relation between $X$ and $Y$ specified by: $(\sup\mZ)(x,y) = \sup\{Z(x,y) \mid Z \in \mZ\}$ for $\tuple{x,y} \in X \times Y$. 
We write $Z_1 \cup Z_2$ to denote $\sup\{Z_1,Z_2\}$. 

A fuzzy relation $Z: X \times X \to L$ is 
\begin{itemize}
\item {\em reflexive} if $Z(x,x) = 1$ for all $x \in X$, 
\item {\em symmetric} if $Z(x,y) = Z(y,x)$ for all $x,y \in X$,
\item {\em transitive} if $Z(x,y) \fand Z(y,z) \leq Z(x,z)$ for all $x,y,z \in X$. 
\end{itemize}
It is a {\em fuzzy equivalence relation} if it is reflexive, symmetric and transitive. 

\subsection{Fuzzy Modal Logics}

Let $\SA$ denote a non-empty set of {\em actions}, which are also called {\em atomic programs}, 
and let $\SP$ denote a non-empty set of {\em propositions}, which are also called {\em atomic formulas}. 
The pair $\tuple{\SA,\SP}$ forms the {\em signature} for the fuzzy modal logics considered in this article. 

Let $\Phi \subseteq \{\cup,\to,?\}$. 
By \fPDLwP we denote the fuzzy propositional dynamic logic without the union program constructor if $\cup$ belongs to $\Phi$, without the test operator if $?$ belongs to $\Phi$, and without the full version of implication if $\to$ belongs to $\Phi$. 

In the following, an expression like $\cup \notin \Phi$ can be read as ``$\cup$ is not excluded''.  

{\em Programs} and {\em formulas} of \fPDLwP over a residuated lattice $\mL = \tuple{L, \leq, \fand, \fto, 0, 1}$ are defined as follows:
\begin{itemize}
\item if $\varrho \in \SA$, then $\varrho$ is a program of \fPDLwP, 
\item if $\alpha$ and $\beta$ are programs of \fPDLwP, then 
	\begin{itemize}
	\item $\alpha \circ \beta$ and $\alpha^*$ are programs of \fPDLwP, 
	\item if $\cup \notin \Phi$, then $\alpha \cup \beta$ is a program of \fPDLwP, 
	\item if $? \notin \Phi$ and $\varphi$ is a formula of \fPDLwP, then $\varphi?$ is a program of \fPDLwP, 
	\end{itemize}
\item if $a \in L$, then $a$ is a formula of \fPDLwP, 
\item if $p \in \SP$, then $p$ is a formula of \fPDLwP, 
\item if $\varphi$ and $\psi$ are formulas of \fPDLwP, $\alpha$ is a program of \fPDLwP and $a \in L$, then $\varphi \land \psi$, $\varphi \lor \psi$, $a \to \varphi$, $\varphi \to a$, $\lnot\varphi$, $[\alpha]\varphi$ and $\tuple{\alpha}\varphi$ are formulas of \fPDLwP,   
\item if $\varphi$ and $\psi$ are formulas of \fPDLwP and $\to\ \notin \Phi$, then $\varphi \to \psi$ is a formula of \fPDLwP.
\end{itemize} 

Note that, even when $\to\ \in \Phi$, \fPDLwP allows implications of the form $a \to \varphi$ or $\varphi \to a$ with $a \in L$. 
By \fPDL we denote \fPDLwP with $\Phi = \emptyset$. 
By \fKz we denote the largest sublanguage of \fPDLwP with $\Phi = \{\cup$, $\to$, $?\}$ that disallows the remaining program constructors ($\alpha \circ \beta$ and $\alpha^*$) and the formula constructors $\lnot\varphi$, $\varphi \lor \psi$ and $[\alpha]\varphi$. 
That is, formulas of \fKz are of the form $a$, $p$, $\varphi \land \psi$, $a \to \varphi$, $\varphi \to a$ or $\tuple{\varrho}\varphi$, where $a \in L$, $p \in \SP$, $\varrho \in \SA$, and $\varphi$ and $\psi$ are formulas of~\fKz.

We use letters like 
\begin{itemize}
\item $\varrho$ to denote actions from $\SA$, 
\item $p$ and $q$ to denote propositions from $\SP$, 
\item $a$ and $b$ to denote values from $L$, 
\item $\varphi$ and $\psi$ to denote formulas, 
\item $\alpha$ and $\beta$ to denote programs.
\end{itemize}

Given a finite set $\Gamma = \{\varphi_1,\ldots,\varphi_n\}$ with \red{$n \geq 0$}, we denote
\begin{eqnarray*}
\textstyle\bigwedge\Gamma & = & \varphi_1 \land\ldots\land \varphi_n \land 1, \\
\textstyle\bigotimes\Gamma & = & \varphi_1 \fand\cdots\fand \varphi_n \fand 1.
\end{eqnarray*}

\begin{definition}\label{def: HFHAD}
A {\em fuzzy Kripke model} over a signature $\tuple{\SA,\SP}$ and a residuated lattice $\mL = \tuple{L, \leq, \fand, \fto, 0, 1}$ is a pair $\mM = \tuple{\Delta^\mM, \cdot^\mM}$, where $\Delta^\mM$ is a non-empty set, called the {\em domain}, and $\cdot^\mM$ is the {\em interpretation function} that maps each $p \in \SP$ to a fuzzy set $p^\mM: \Delta^\mM \to L$ and maps each $\varrho \in \SA$ to a fuzzy relation $\varrho^\mM: \Delta^\mM \times \Delta^\mM \to L$. 
The interpretation function is extended to complex programs and formulas as follows, under the condition that the used suprema and infima exist (e.g., by requiring $\mL$ to be complete or $\mM$ to be witnessed as defined shortly).  
\begin{eqnarray*}
(\varphi?)^\mM(x,y) & \!=\! & \textrm{(if $x = y$ then $\varphi^\mM(x)$ else 0)} \\
(\alpha \cup \beta)^\mM(x,y) & \!=\! & \alpha^\mM(x,y) \lor \beta^\mM(x,y) \\
(\alpha \circ \beta)^\mM(x,y) & \!=\! & \sup\{\alpha^\mM(x,z) \fand \beta^\mM(z,y) \mid z \in \Delta^\mM \} \\
(\alpha^*)^\mM(x,y) & \!=\! & \sup \{\textstyle\bigotimes\{\alpha^\mM(x_i,x_{i+1}) \mid 0 \leq i < n\} \mid \\
	&& \qquad n \geq 0,\ x_0,\ldots,x_n \in \Delta^\mM, x_0 = x,\ x_n = y\} \\
a^\mM(x) & \!=\! & a \\
(\varphi \land \psi)^\mM(x) & \!=\! & \varphi^\mM(x) \land \psi^\mM(x) \\
(\varphi \lor \psi)^\mM(x) & \!=\! & \varphi^\mM(x) \lor \psi^\mM(x) \\
(\varphi \to \psi)^\mM(x) & \!=\! & (\varphi^\mM(x) \fto \psi^\mM(x)) \\
(\lnot\varphi)^\mM(x) & \!=\! & (\varphi \to 0)^\mM(x) \\
([\alpha]\varphi)^\mM(x) & \!=\! & \inf \{\alpha^\mM(x,y) \fto \varphi^\mM(y) \mid y \in \Delta^\mM\} \\
(\tuple{\alpha}\varphi)^\mM(x) & \!=\! & \sup \{\alpha^\mM(x,y) \fand \varphi^\mM(y) \mid y \in \Delta^\mM\}.
\end{eqnarray*}

\vspace{-3.5ex}

\myend
\end{definition}

\begin{example}
Let $\SA = \{\varrho\}$, $\SP = \{p\}$ and let $L$ be the unit interval $[0,1]$. 
Consider the fuzzy Kripke model $\mM$ specified by 
$\Delta^\mM = \{u,v,w\}$, 
$p^\mM = \{u\!:\!0.9,\ v\!:\!0.5,\ w\!:\!0.8\}$, 
$\varrho^\mM = \{\tuple{u,v}\!:\!0.6$, $\tuple{u,w}\!:\!0.7\}$ and depicted below:
\begin{center}
\begin{tikzpicture}[->,>=stealth]
\node (a) {$u: 0.9$};
\node (ua) [node distance=1.5cm, below of=a] {};
\node (b) [node distance=2cm, left of=ua] {$v: 0.5$};
\node (c) [node distance=2cm, right of=ua] {$w: 0.8$};
\draw (a) to node [left,yshift=2mm]{0.6} (b);
\draw (a) to node [right,yshift=2mm]{0.7} (c);
\end{tikzpicture}
\end{center}
The values $\varphi^\mM(u)$ for some example formulas $\varphi$ using the G\"odel, {\L}ukasiewicz or product t-norm~$\fand$ are given below:
\begin{center}
\begin{tabular}{|c|c|c|c|}
\hline
& \ \ \ \ G\"odel\ \ \ \ & \ \ {\L}ukasiewicz\ \ \ & Product \\
\hline
& & \\[-2.3ex]
$(\tuple{\varrho}p)^\mM(u)$ & $0.7$ & $0.5$ & $0.56$ \\
\hline
& & \\[-2.3ex]
$([\varrho]p)^\mM(u)$ & $0.5$ & $0.9$ & $5/6$ \\
\hline
& & \\[-2.3ex]
$(\tuple{\varrho^*}p)^\mM(u)$ & $0.9$ & $0.9$ & $0.9$ \\
\hline
& & \\[-2.3ex]
$([\varrho^*]p)^\mM(u)$ & $0.5$ & $0.9$ & $5/6$ \\
\hline
\end{tabular}
\end{center}

\vspace{-2.5ex}

\myend
\end{example}

A fuzzy Kripke model $\mM$ is {\em witnessed} w.r.t.\ \fPDLwP if every infinite set under the infimum (resp.\ supremum) operator in Definition~\ref{def: HFHAD} has a smallest (resp.\ biggest) element when considering only formulas and programs of \fPDLwP (cf.~\cite{Hajek05}). 
The notion of whether a fuzzy Kripke model $\mM$ is {\em witnessed} w.r.t.\ \fKz is defined analogously by restricting to formulas and programs of \fKz. 

A fuzzy Kripke model $\mM$ is {\em image-finite} if, for every $x \in \Delta^\mM$ and every $\varrho \in \SA$, the set $\{y \in \Delta^\mM \mid \varrho^\mM(x,y) > 0\}$ is finite. It is {\em finite} if \red{$\Delta^\mM$ is finite}.

Observe that every finite fuzzy Kripke model is witnessed w.r.t.\ \fPDLwP (and hence also w.r.t.\ \fKz) and every image-finite fuzzy Kripke model is witnessed w.r.t.~\fKz. If the underlying residuated lattice is finite, then all fuzzy Kripke models are witnessed w.r.t.\ \fPDLwP. 

%===============================================================================

\section{Fuzzy Bisimulations between Kripke Models}
\label{section: fbir}

In this section, we define fuzzy bisimulations between fuzzy Kripke models, then state and prove some of their basic properties. 
The relationship with fuzzy bisimulations between fuzzy automata~\cite{CiricIDB12} is presented in~\ref{appendix B}. 

\begin{definition}\label{def: HDJAK}
Given fuzzy Kripke models $\mM$ and $\mMp$, a fuzzy relation \mbox{$Z : \Delta^\mM \times \Delta^\mMp \to L$} is called a {\em fuzzy bisimulation} between $\mM$ and $\mMp$ if the following conditions hold for all $p \in \SP$, $\varrho \in \SA$ and all possible values for the free variables:
\begin{eqnarray}
\!\!\!\!\!\!\!\!\!\!&& Z(x,x') \leq (p^\mM(x) \fequiv p^\mMp(\red{x'})) \label{eq: FB1} \\
\!\!\!\!\!\!\!\!\!\!&& \E y' \in \Delta^\mMp\ (Z(x,x') \fand \varrho^\mM(x,y) \leq \varrho^\mMp(x',y') \fand Z(y,y')) \label{eq: FB2} \\
\!\!\!\!\!\!\!\!\!\!&& \E y \in \Delta^\mM\ (Z(x,x') \fand \varrho^\mMp(x',y') \leq \varrho^\mM(x,y) \fand Z(y,y')). \label{eq: FB3}
\end{eqnarray}
\end{definition}

\begin{example}
	Let $\SA = \{\varrho\}$, $\SP = \{p\}$ and $L = [0,1]$. Consider the fuzzy Kripke models $\mM$ and $\mMp$ depicted and specified below. 
	\begin{center}
		\begin{tikzpicture}
		\node (x0) {};
		\node (x) [node distance=3.0cm, right of=x0] {};
		\node (I) [node distance=0.0cm, below of=x] {$\mM$};
		\node (I1) [node distance=5.0cm, right of=I] {$\mMp$};
		\node (u) [node distance=0.7cm, below of=I] {$u:0$};
		\node (ub) [node distance=2.0cm, below of=u] {};
		\node (v) [node distance=1.0cm, left of=ub] {$v:0.5$};
		\node (w) [node distance=1.0cm, right of=ub] {$w:0.8$};
		\node (up) [node distance=0.7cm, below of=I1] {$u':0$};
		\node (ubp) [node distance=2.0cm, below of=up] {};
		\node (vp) [node distance=1.0cm, left of=ubp] {$v':0.5$};
		\node (wp) [node distance=1.0cm, right of=ubp] {$w':0.8$};
		\draw[->] (u) to node [left]{\footnotesize{0.6}} (v);
		\draw[->] (u) to node [right]{\footnotesize{1}} (w);
		\draw[->] (up) to node [left]{\footnotesize{1}} (vp);
		\draw[->] (up) to node [right]{\footnotesize{0.8}} (wp);
		\end{tikzpicture}
	\end{center}
	\begin{itemize}
		\item $\Delta^\mM = \{u,v,w\}$,\ \ $\Delta^\mMp = \{u',v',w'\}$, 
		\item $p^\mM = \{u\!:\!0,\, v\!:\!0.5,\, w\!:\!0.8\}$,\ \ $p^\mMp = \{u'\!:\!0,\, v'\!:\!0.5,\, w'\!:\!0.8\}$,
		\item $\varrho^\mM = \{\tuple{u,v}\!:\!0.6, \tuple{u,w}\!:\!1\}$,\ \ $\varrho^\mMp = \{\tuple{u',v'}\!:\!1, \tuple{u',w'}\!:\!0.8\}$.
	\end{itemize}
	
	In the case when $\fand$ is the G\"odel, {\L}ukasiewicz or product t-norm, the greatest fuzzy bisimulation $Z$ between $\mM$ and $\mMp$ can be computed as follows:
	\begin{itemize}
		\item $Z(v,v') = (0.5 \fequiv 0.5) = 1$,\ \ $Z(w,w') = (0.8 \fequiv 0.8) = 1$;
		\item $Z(v,w') = (0.5 \fequiv 0.8)$,\ \ $Z(w,v') = (0.8 \fequiv 0.5)$;
		\item $Z(v,u') \leq (0.5 \fequiv 0) = 0$,\ \ $Z(w,u') \leq (0.8 \fequiv 0) = 0$;
		\item $Z(u,v') \leq (0 \fequiv 0.5) = 0$,\ \ $Z(u,w') \leq (0\fequiv 0.8) = 0$;
		\item $(\varrho^\mMp \circ Z^-)(u',v) = 1$,\ \ $(\varrho^\mMp \circ Z^-)(u',w) = \max\{0.8, (0.8 \fequiv 0.5)\} = 0.8$,\\
		thus, the condition~\eqref{eq: ERJSJ 2} only requires $Z(u,u') \leq 0.8$;
		\item $(\varrho^\mM \circ Z)(u,w') = 1$,\ \ $(\varrho^\mM \circ Z)(u,v') = \max\{0.6, (0.8 \fequiv 0.5)\}$,\\
		thus, the condition~\eqref{eq: ERJSJ 3} only requires $Z(u,u') \leq \max\{0.6, (0.8 \fequiv 0.5)\}$;
		\item therefore, $Z(u,u') = \max\{0.6, (0.8 \fequiv 0.5)\}$.
	\end{itemize}
	That is, 
	\begin{itemize}
		\item if $\fand$ is the G\"odel t-norm, then 
		\[ Z = \{\tuple{u,u'}\!:\!0.6,\ \tuple{v,v'}\!:\!1,\ \tuple{w,w'}\!:\!1,\ \tuple{v,w'}\!:\!0.5,\ \tuple{w,v'}\!:\!0.5\}; \]
		\item if $\fand$ is the {\L}ukasiewicz t-norm, then 
		\[ Z = \{\tuple{u,u'}\!:\!0.7,\ \tuple{v,v'}\!:\!1,\ \tuple{w,w'}\!:\!1,\ \tuple{v,w'}\!:\!0.7,\ \tuple{w,v'}\!:\!0.7\}; \]
		\item if $\fand$ is the product t-norm, then 
		\[ \qquad\qquad\quad Z = \{\tuple{u,u'}\!:\!0.625,\ \tuple{v,v'}\!:\!1,\ \tuple{w,w'}\!:\!1,\ \tuple{v,w'}\!:\!0.625,\ \tuple{w,v'}\!:\!0.625\}.\qquad\qquad\quad\Myend \]
	\end{itemize}
\end{example}

\begin{proposition}\label{prop: HFKSS}
	Suppose that the underlying residuated lattice is complete. 
	If $Z$ is a fuzzy bisimulation between fuzzy Kripke models $\mM$ and $\mMp$, then it satisfies the following conditions for all $x \in \Delta^\mM$, $x' \in \Delta^\mMp$ and $\varrho \in \SA$:
	\begin{eqnarray}
	Z(x,x') & \leq & \inf\{p^\mM(x) \fequiv p^\mMp(\red{x'}) \mid p \in \SP\} \label{eq: ERJSJ 1}\\
	Z^- \circ \varrho^\mM & \leq & \varrho^\mMp \circ Z^- \label{eq: ERJSJ 2}\\
	Z \circ \varrho^\mMp & \leq & \varrho^\mM \circ Z. \label{eq: ERJSJ 3}
	\end{eqnarray}
	Conversely, if the underlying residuated lattice is also linear, fuzzy Kripke models $\mM$ and $\mMp$ are image-finite and \mbox{$Z: \Delta^\mM \times \Delta^\mMp \to L$} is a fuzzy relation satisfying the conditions~\eqref{eq: ERJSJ 1}--\eqref{eq: ERJSJ 3}, then $Z$ is a fuzzy bisimulation between~$\mM$ and~$\mMp$. 
\end{proposition}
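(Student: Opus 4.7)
My plan is to treat the two directions separately, with most of the work in the converse.

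For the forward direction, assume $Z$ is a fuzzy bisimulation. Condition~\eqref{eq: ERJSJ 1} is immediate: \eqref{eq: FB1} gives $Z(x,x') \leq p^\mM(x) \fequiv p^\mMp(x)$ for every $p \in \SP$, so taking the infimum over $p$ yields the claim. For~\eqref{eq: ERJSJ 2}, unfolding converse and composition, I need to show
\[
\sup_{x \in \Delta^\mM} \bigl(Z(x,x') \fand \varrho^\mM(x,y)\bigr) \;\leq\; \sup_{y' \in \Delta^\mMp} \bigl(\varrho^\mMp(x',y') \fand Z(y,y')\bigr)
\]
for all $x' \in \Delta^\mMp$ and $y \in \Delta^\mM$. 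Fix $x$; by \eqref{eq: FB2} there is some $y' \in \Delta^\mMp$ with $Z(x,x') \fand \varrho^\mM(x,y) \leq \varrho^\mMp(x',y') \fand Z(y,y')$, which is bounded above by the right-hand supremum. Taking the supremum over $x$ on the left finishes the case. Condition~\eqref{eq: ERJSJ 3} is symmetric, using \eqref{eq: FB3}.

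For the converse, assume $\mL$ is complete and linear, both models are image-finite, and $Z$ satisfies \eqref{eq: ERJSJ 1}--\eqref{eq: ERJSJ 3}. The reconstruction of \eqref{eq: FB1} is trivial since the infimum on the right-hand side of \eqref{eq: ERJSJ 1} is $\leq p^\mM(x) \fequiv p^\mMp(x)$ for each~$p$. The real content is to recover the existential witnesses in \eqref{eq: FB2} and \eqref{eq: FB3}. Consider \eqref{eq: FB2}: fix $x$, $x'$, $y$. From \eqref{eq: ERJSJ 2} I get
\[
Z(x,x') \fand \varrho^\mM(x,y) \;\leq\; \sup_{y' \in \Delta^\mMp} \bigl(\varrho^\mMp(x',y') \fand Z(y,y')\bigr).
\]
Here is where the two extra hypotheses come in. Since $\mMp$ is image-finite, the set $Y' = \{y' \in \Delta^\mMp \mid \varrho^\mMp(x',y') > 0\}$ is finite; by~\eqref{fop: GDJSK 40} the terms with $y' \notin Y'$ contribute $0$, so the supremum above equals $\sup\{\varrho^\mMp(x',y') \fand Z(y,y') \mid y' \in Y'\}$. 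If $Y' = \emptyset$, the supremum is $0$, forcing $Z(x,x') \fand \varrho^\mM(x,y) = 0$, and then any fixed $y' \in \Delta^\mMp$ (which is non-empty) witnesses \eqref{eq: FB2}. If $Y'$ is non-empty, then since $\mL$ is linear and $Y'$ is finite, the supremum over $Y'$ is attained as a maximum at some $y' \in Y'$, which is the required witness. The argument for \eqref{eq: FB3} is identical, swapping the roles of the two models and using image-finiteness of~$\mM$ together with~\eqref{eq: ERJSJ 3}.

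I expect the forward direction to be essentially bookkeeping with suprema and composition. The delicate step is turning the supremum inequality into a single witness in the converse direction; this is exactly where linearity (to pass from sup to max over a finite set) and image-finiteness (to reduce the sup to one over a finite set in the first place) enter in an essential way. These two hypotheses cannot be dropped independently, which is why the statement combines them.
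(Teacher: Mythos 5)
Your proof is correct; the paper itself omits the argument (stating only that it is ``straightforward''), and what you have written is exactly the intended argument: the forward direction is bookkeeping with the definitions of converse and composition, and the converse direction uses image-finiteness to cut the supremum down to a finite set and linearity to realize that finite supremum as an attained maximum, yielding the existential witness required by conditions~\eqref{eq: FB2} and~\eqref{eq: FB3}. The handling of the degenerate case $Y'=\emptyset$ via~\eqref{fop: GDJSK 40} is also right.
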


\begin{proof}\markRed
Suppose that $Z$ is a fuzzy bisimulation between fuzzy Kripke models $\mM$ and $\mMp$. Let $x \in \Delta^\mM$, $x' \in \Delta^\mMp$ and $\varrho \in \SA$. We show that $Z$ satisfies the assertions \eqref{eq: ERJSJ 1}--\eqref{eq: ERJSJ 3}. 

The assertion~\eqref{eq: ERJSJ 1} holds because \eqref{eq: FB1} holds for all $p \in \SP$. 

Consider the assertion~\eqref{eq: ERJSJ 2}. We need to prove that, for every $x' \in \Delta^\mMp$ and $y \in \Delta^\mM$, 
\[ (Z^- \circ \varrho^\mM)(x',y) \leq (\varrho^\mMp \circ Z^-)(x',y). \]
It is sufficient to show that, for every $x \in \Delta^\mM$, 
\[ Z(x,x') \fand \varrho^\mM(x,y) \leq (\varrho^\mMp \circ Z^-)(x',y). \]
This inequality follows from the condition~\eqref{eq: FB2}. 

Consider the assertion~\eqref{eq: ERJSJ 3}. We need to prove that, for every $x \in \Delta^\mM$ and $y' \in \Delta^\mMp$, 
\[ (Z \circ \varrho^\mMp)(x,y') \leq (\varrho^\mM \circ Z)(x,y'). \]
It is sufficient to show that, for every $x' \in \Delta^\mMp$, 
\[ Z(x,x') \fand \varrho^\mMp(x',y') \leq (\varrho^\mM \circ Z)(x,y'). \]
This inequality follows from the condition~\eqref{eq: FB3}.

For the converse, suppose that the underlying residuated lattice is linear, $\mM$ and $\mMp$ are image-finite fuzzy Kripke models and \mbox{$Z: \Delta^\mM \times \Delta^\mMp \to L$} is a fuzzy relation satisfying the conditions~\eqref{eq: ERJSJ 1}--\eqref{eq: ERJSJ 3}. We prove that $Z$ is a fuzzy bisimulation between~$\mM$ and~$\mMp$. Let $p \in \SP$, $\varrho \in \SA$, $x \in \Delta^\mM$ and $x' \in \Delta^\mMp$. We need to show that $Z$ satisfies the assertions~\eqref{eq: FB1}--\eqref{eq: FB3}, for any $y \in \Delta^\mM$ or $y' \in \Delta^\mMp$ if it is a free variable.

The assertion~\eqref{eq: FB1} follows from~\eqref{eq: ERJSJ 1}. 

Consider the assertion~\eqref{eq: FB2} for any $y \in \Delta^\mM$. By definition, 
\[ 
	Z(x,x') \fand \varrho^\mM(x,y) \leq (Z^- \circ \varrho^\mM)(x',y).
\]
By~\eqref{eq: ERJSJ 2}, it follows that  
\[ 
	Z(x,x') \fand \varrho^\mM(x,y) \leq (\varrho^\mMp \circ Z^-)(x',y).
\]
Since the underlying residuated lattice is linear and $\mMp$ is image-finite, there exists $y' \in \Delta^\mMp$ such that 
\[ 
	(\varrho^\mMp \circ Z^-)(x',y) = \varrho^\mMp(x',y') \fand Z(y,y'). 
\]
Therefore, the following inequality holds, which implies~\eqref{eq: FB2}: 
\[ 
	Z(x,x') \fand \varrho^\mM(x,y) \leq \varrho^\mMp(x',y') \fand Z(y,y').
\]

Consider the assertion~\eqref{eq: FB3} for any $y' \in \Delta^\mMp$. By definition, 
\[
	Z(x,x') \fand \varrho^\mMp(x',y') \leq (Z \circ \varrho^\mMp)(x,y').
\]
By~\eqref{eq: ERJSJ 3}, it follows that 
\[
	Z(x,x') \fand \varrho^\mMp(x',y') \leq (\varrho^\mM \circ Z)(x,y').
\]
Since the underlying residuated lattice is linear and $\mM$ is image-finite, there exists $y \in \Delta^\mM$ such that 
\[
	(\varrho^\mM \circ Z)(x,y') = \varrho^\mM(x,y) \fand Z(y,y'). 
\]
Therefore, the following inequality holds, which implies~\eqref{eq: FB3}: 
\[
	Z(x,x') \fand \varrho^\mMp(x',y') \leq \varrho^\mM(x,y) \fand Z(y,y').
\]

\vspace{-2ex}

\myend
\end{proof}

\begin{example}\markRed
Let the underlying residuated lattice use $L = \{0,a,b,1\}$ (with four pairwise distinct elements) where $a$ and $b$ are not comparable. Let $\SA = \{\varrho\}$ and $\SP = \emptyset$. Consider the Kripke models $\mM$ and $\mMp$ specified and depicted below:
\begin{itemize}
\item $\Delta^\mM = \{u,v\}$,\ \ $\varrho^\mM = \{\tuple{u,v}:1\}$,
\item $\Delta^\mMp = \{u',v'_1, v'_2\}$,\ \ $\varrho^\mMp = \{\tuple{u',v'_1}\!:\!a, \tuple{u',v'_2}\!:\!b\}$.
\end{itemize}
\begin{center}
	\begin{tikzpicture}
	\node (I) {$\mM$};
	\node (I1) [node distance=6.0cm, right of=I] {$\mMp$};
	\node (u) [node distance=0.7cm, below of=I] {$u$};
	\node (v) [node distance=1.5cm, below of=u] {$v$};
	\node (up) [node distance=0.7cm, below of=I1] {$u'$};
	\node (ubp) [node distance=1.5cm, below of=up] {};
	\node (v1) [node distance=1.0cm, left of=ubp] {$v'_1$};
	\node (v2) [node distance=1.0cm, right of=ubp] {$v'_2$};
	\draw[->] (u) to node [left]{\footnotesize{1}} (v);
	\draw[->] (up) to node [left]{a} (v1);
	\draw[->] (up) to node [right]{b} (v2);
	\end{tikzpicture}
\end{center}
Let $Z: \Delta^\mM \times \Delta^\mMp \to L$ be the fuzzy relation specified by $Z = \{\tuple{u,u'}\!:\!1, \tuple{v,v'_1}\!:\!1, \tuple{v,v'_2}\!:\!1\}$. 
Observe that $Z$ satisfies the conditions~\eqref{eq: ERJSJ 1}--\eqref{eq: ERJSJ 3}, but it is not a fuzzy bisimulation between $\mM$ and $\mMp$, as it does not satisfy the condition~\eqref{eq: FB2}. The reason is that the underlying residuated lattice is not linear.
\myend
\end{example}

\begin{example}\label{example: HGDHJ}\markRed{}
Let $L$ be the unit interval $[0,1]$ and $\fand$ the G\"odel t-norm. Let $\SA = \{\varrho\}$ and $\SP = \emptyset$. Consider the Kripke models $\mM$ and $\mMp$ specified and depicted below (cf.~\cite{CaoCK11,FSS2020}):
\begin{itemize}
\item $\Delta^\mM = \{u,v\}$,\ \ $\varrho^\mM = \{\tuple{u,v}:1\}$,
\item $\Delta^\mMp = \{u',v'_i \mid i \in \NN \setminus \{0\}\}$,\ \ $\varrho^\mMp = \{\tuple{u',v'_i}\!:\!\frac{i}{i+1} \mid i \in \NN \setminus \{0\}\}$.
\end{itemize}
\begin{center}
	\begin{tikzpicture}
	\node (x0) {};
	\node (x) [node distance=3.0cm, right of=x0] {};
	\node (I) [node distance=0.0cm, below of=x] {$\mM$};
	\node (I1) [node distance=6.0cm, right of=I] {$\mMp$};
	\node (u) [node distance=0.7cm, below of=I] {$u$};
	\node (v) [node distance=1.5cm, below of=u] {$v$};
	\node (up) [node distance=0.7cm, below of=I1] {$u'$};
	\node (ubp) [node distance=1.5cm, below of=up] {};
	\node (v1) [node distance=3.0cm, left of=ubp] {$v'_1$};
	\node (v2) [node distance=0.5cm, left of=ubp] {$v'_2$};
	\node (v3) [node distance=1.0cm, right of=ubp] {\ldots};
	\node (vn) [node distance=3.0cm, right of=ubp] {$v'_n$};
	\node (vnp) [node distance=4.5cm, right of=ubp] {\ldots};
	\draw[->] (u) to node [left]{\footnotesize{1}} (v);
	\draw[->] (up) to node [above=2pt, pos=0.60]{$\frac{1}{2}$} (v1);
	\draw[->] (up) to node [left]{$\frac{2}{3}$} (v2);
	\draw[->] (up) to node [above=2pt, pos=0.65]{$\frac{n}{n+1}$} (vn);
	\end{tikzpicture}
\end{center}
Let $Z: \Delta^\mM \times \Delta^\mMp \to L$ be the fuzzy relation specified by $Z = \{\tuple{u,u'}\!:\!1, \tuple{v,v'_i}\!:\!1 \mid i \in \NN \setminus \{0\}\}$. 
Observe that $Z$ satisfies the conditions~\eqref{eq: ERJSJ 1}--\eqref{eq: ERJSJ 3}, but it is not a fuzzy bisimulation between $\mM$ and $\mMp$, as it does not satisfy the condition~\eqref{eq: FB2}. The reason is that $\mMp$ is not image-finite. 
\myend
\end{example}

The above proposition is related to Remark~3.4 of~\cite{FSS2020}. 
In~\cite{Fan15}, Fan studied fuzzy bisimulations that are defined for fuzzy Kripke models (over a signature with $|\SA| = 1$) using conditions like~\eqref{eq: ERJSJ 1}--\eqref{eq: ERJSJ 3} and the G\"odel semantics over the unit interval $[0,1]$. In~\cite{FSS2020}, Nguyen {\em et al.}~studied fuzzy bisimulations that are defined for interpretations in description logics using conditions like~\eqref{eq: FB1}--\eqref{eq: FB3} and the G\"odel semantics over the unit interval $[0,1]$. Note that the residuated lattices used in both the works~\cite{Fan15,FSS2020} are linear, complete and use a fixed operator $\fand$, which is the G\"odel t-norm. 
The relationship between~\eqref{eq: FB1}--\eqref{eq: FB3} and~\eqref{eq: ERJSJ 1}--\eqref{eq: ERJSJ 3} is characterized by the above proposition. On one hand, the conditions~\eqref{eq: FB1}--\eqref{eq: FB3} do not require the underlying residuated lattice to be complete and, as discussed in~\cite{FSS2020}, the style is appropriate for the extension that deals with number restrictions (in description logics) and graded modalities. On the other hand, when restricting to complete residuated lattices and the case without graded modalities, the conditions~\eqref{eq: ERJSJ 1}--\eqref{eq: ERJSJ 3} are weaker\footnote{\red{\eqref{eq: FB2} implies \eqref{eq: ERJSJ 2}, but not vice versa; similarly, \eqref{eq: FB3} implies \eqref{eq: ERJSJ 3}, but not vice versa.}} and, \red{when used instead of \eqref{eq: FB1}--\eqref{eq: FB3}}, make the notion of fuzzy bisimulation stronger\footnote{\red{in the sense that more fuzzy relations can be fuzzy bisimulations (see, e.g., Example~\ref{example: HGDHJ}).}} for non-image-finite fuzzy Kripke models. 

A fuzzy bisimulation between $\mM$ and itself is called a {\em fuzzy auto-bisimulation} of~$\mM$.

\begin{proposition}\label{prop: HFHSJ}
Let $\mM$, $\mMp$ and $\mM''$ be \red{image-finite} fuzzy Kripke models.
\begin{enumerate}
\item\label{ass: HFHSJ 1} \red{The fuzzy relation $Z : \Delta^\mM \times \Delta^\mM \to L$ specified by $Z(x,x')$ = (if} $x = x'$ then 1 else 0) is a fuzzy auto-bisimulation of~$\mM$.
\item\label{ass: HFHSJ 2} If $Z$ is a fuzzy bisimulation between $\mM$ and $\mMp$, then $Z^-$ is a fuzzy bisimulation between $\mMp$ and~$\mM$.
\item\label{ass: HFHSJ 3} If the underlying residuated lattice is \red{linear and} complete, $Z_1$ is a fuzzy bisimulation between $\mM$ and $\mMp$, and $Z_2$ is a fuzzy bisimulation between $\mMp$ and $\mM''$, then $Z_1 \circ Z_2$ is a fuzzy bisimulation between $\mM$ and $\mM''$.
\item\label{ass: HFHSJ 4} If the underlying residuated lattice is linear and $\mZ$ is a finite set of fuzzy bisimulations between $\mM$ and $\mMp$, then $\sup\mZ$ is also a fuzzy bisimulation between $\mM$ and $\mMp$.
\end{enumerate}   
\end{proposition}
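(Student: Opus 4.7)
The plan is to verify, for each assertion, conditions \eqref{eq: FB1}--\eqref{eq: FB3} of Definition~\ref{def: HDJAK} using the residuated-lattice identities of Lemma~\ref{lemma: JHFJW}.

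Assertion~(\ref{ass: HFHSJ 1}) is immediate: if $Z$ is the crisp diagonal, then $x \neq x'$ makes $Z(x,x') = 0$ and trivializes all three conditions, whereas for $x = x'$ condition~\eqref{eq: FB1} reduces to $p^\mM(x) \fequiv p^\mM(x) = 1$, and conditions~\eqref{eq: FB2}--\eqref{eq: FB3} are witnessed by $y' = y$, making the two sides coincide. Assertion~(\ref{ass: HFHSJ 2}) follows from the symmetry of $\fequiv$ for~\eqref{eq: FB1}, while \eqref{eq: FB2} and~\eqref{eq: FB3} for $Z^-$ are just \eqref{eq: FB3} and~\eqref{eq: FB2} of $Z$ with the roles of $\mM$ and $\mMp$ interchanged.

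For assertion~(\ref{ass: HFHSJ 3}), start by expanding $(Z_1 \circ Z_2)(x,x'') = \sup_{x'} Z_1(x,x') \fand Z_2(x',x'')$. Then~\eqref{eq: FB1} follows by bounding each summand via~\eqref{eq: FB1} of $Z_1$ and $Z_2$ and chaining the two $\fequiv$-factors through~\eqref{fop: GDJSK 115a}. For~\eqref{eq: FB2}, fix $x, x'', y$ and $\varrho$ and exploit distributivity of $\fand$ over suprema (which holds in any complete residuated lattice by adjointness) to reduce to bounding $Z_1(x,x') \fand \varrho^\mM(x,y) \fand Z_2(x',x'')$ for each intermediate $x' \in \Delta^\mMp$; apply~\eqref{eq: FB2} of $Z_1$ to extract a witness $y' \in \Delta^\mMp$, then~\eqref{eq: FB2} of $Z_2$ on the pair $\tuple{x',y'}$ to extract $y'' \in \Delta^{\mM''}$, and conclude via $Z_1(y,y') \fand Z_2(y',y'') \leq (Z_1 \circ Z_2)(y,y'')$. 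Condition~\eqref{eq: FB3} is handled by the same pattern.

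Assertion~(\ref{ass: HFHSJ 4}) uses linearity crucially: for a fixed tuple $\tuple{x,x'}$, the finite supremum $Z(x,x') = \sup \{W(x,x') \mid W \in \mZ\}$ is \emph{attained} by some $W^* \in \mZ$, since the maximum of finitely many comparable elements in a linear lattice must coincide with one of them. Condition~\eqref{eq: FB1} then descends from $W^*$; for~\eqref{eq: FB2} invoke the witness $y'$ produced by $W^*$ and lift $W^*(y,y') \leq Z(y,y')$; condition~\eqref{eq: FB3} is analogous. I expect the main subtlety to lie in the single-witness extraction in part~(\ref{ass: HFHSJ 3})---realising one $y''$ from witnesses a priori indexed by the intermediate point $x'$ in the supremum---and in part~(\ref{ass: HFHSJ 4}), where linearity is exactly what lets a single member of $\mZ$ realise the pointwise supremum, an argument that would fail over a general non-linear complete lattice.
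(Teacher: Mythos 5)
Your proposal is correct and follows essentially the same route as the paper's proof: chaining \eqref{fop: GDJSK 115a} for condition~\eqref{eq: FB1}, composing the two existential witnesses through the intermediate point $x'$ for \eqref{eq: FB2}--\eqref{eq: FB3}, and using linearity so that a single member of $\mZ$ attains the pointwise maximum in assertion~\ref{ass: HFHSJ 4}. The one subtlety you flag in assertion~\ref{ass: HFHSJ 3} --- extracting a witness $y''$ uniform over the intermediate $x'$ ranging in the supremum defining $Z_1 \circ Z_2$ --- is treated no more explicitly in the paper, whose proof likewise fixes an arbitrary $x'$ and concludes from the resulting $x'$-indexed inequalities.
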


\begin{proof}
The proofs of the first two assertions are straightforward.

Consider the third assertion and assume that the premises hold. We have to show that $Z_1 \circ Z_2$ satisfies the conditions \eqref{eq: FB1}--\eqref{eq: FB3}. 
\begin{itemize}
\item Consider the condition~\eqref{eq: FB1}. 
Let $x \in \Delta^\mM$, $x' \in \Delta^\mMp$, $x'' \in \Delta^{\mM''}$ and $p \in \SP$. 
We have that 
\begin{eqnarray*}
Z_1(x,x') & \leq & p^\mM(x) \fequiv p^\mMp(\red{x'}) \\
Z_2(x',x'') & \leq & p^\mMp(x') \fequiv p^{\mM''}(x'').
\end{eqnarray*}
Due to~\eqref{fop: GDJSK 115a}, 
\( (p^\mM(x) \fequiv p^\mMp(\red{x'})) \fand (p^\mMp(x') \fequiv p^{\mM''}(x'')) \leq (p^\mM(x) \fequiv p^{\mM''}(x'')). \)
By~\eqref{fop: GDJSK 10}, it follows that 
\[ Z_1(x,x') \fand Z_2(x',x'') \leq (p^\mM(x) \fequiv p^{\mM''}(x'')). \]
Therefore, $(Z_1 \circ Z_2)(x,x'') \leq (p^\mM(x) \fequiv p^{\mM''}(x''))$, which completes the proof of~\eqref{eq: FB1}.

\item Consider the condition~\eqref{eq: FB2}. 
Let $x,y \in \Delta^\mM$, $x'' \in \Delta^{\mM''}$ and $\varrho \in \SA$. We need to show that there exists $y'' \in \Delta^{\mM''}$ such that 
\begin{equation}
(Z_1 \circ Z_2)(x,x'') \fand \varrho^\mM(x,y) \leq \varrho^{\mM''}(x'',y'') \fand (Z_1 \circ Z_2)(y,y''). \label{eq: HJEAJ 1}
\end{equation}
Let $x'$ be an arbitrary element of $\Delta^\mMp$. Since $Z_1$ is a fuzzy bisimulation between $\mM$ and $\mMp$, there exists $y' \in \mMp$ such that 
\begin{equation}
Z_1(x,x') \fand \varrho^\mM(x,y) \leq \varrho^\mMp(x',y') \fand Z_1(y,y'). \label{eq: HJEAJ 2}
\end{equation}
Since $Z_2$ is a fuzzy bisimulation between $\mMp$ and $\mM''$, there exists $y'' \in \mM''$ such that 
\begin{equation}
Z_2(x',x'') \fand \varrho^\mMp(x',y') \leq \varrho^{\mM''}(x'',y'') \fand Z_2(y',y''). \label{eq: HJEAJ 3}
\end{equation}
By \eqref{eq: HJEAJ 2}, \eqref{eq: HJEAJ 3} and~\eqref{fop: GDJSK 10}, we have that  
\begin{eqnarray*}
Z_2(x',x'') \fand Z_1(x,x') \fand \varrho^\mM(x,y) & \leq & Z_2(x',x'') \fand \varrho^\mMp(x',y') \fand Z_1(y,y') \\
& \leq & \varrho^{\mM''}(x'',y'') \fand Z_2(y',y'') \fand Z_1(y,y') \\
& \leq & \varrho^{\mM''}(x'',y'') \fand (Z_1 \circ Z_2)(y,y''),
\end{eqnarray*}
which implies~\eqref{eq: HJEAJ 1} \red{because $x'$ is an arbitrary element of $\Delta^\mMp$, $\mM''$ is image-finite and the underlying residuated lattice is linear}.

\item The condition~\eqref{eq: FB3} can be proved analogously.
\end{itemize}

Consider now the fourth assertion and assume that the premises hold. It is sufficient to consider the case when $\mZ = \{Z_1,Z_2\}$. We need to prove that $Z_1 \cup Z_2$ satisfies the conditions \eqref{eq: FB1}--\eqref{eq: FB3}. 

\begin{itemize}
\item Consider the condition~\eqref{eq: FB1}. 
Let $x \in \Delta^\mM$, $x' \in \Delta^\mMp$ and $p \in \SP$. 
Since $Z_1$ and $Z_2$ are fuzzy bisimulations between $\mM$ and $\mMp$, we have that 
$Z_1(x,x') \leq (p^\mM(x) \fequiv p^\mMp(\red{x'}))$ and $Z_2(x,x') \leq (p^\mM(x) \fequiv p^\mMp(\red{x'}))$. 
Hence, $(Z_1 \cup Z_2)(x,x') \leq (p^\mM(x) \fequiv p^\mMp(\red{x'}))$. 

\item Consider the condition~\eqref{eq: FB2}. 
Let $x,y \in \Delta^\mM$, $x' \in \Delta^\mMp$ and $\varrho \in \SA$. We need to show that there exists $y' \in \Delta^\mMp$ such that 
\begin{equation}
(Z_1 \cup Z_2)(x,x') \fand \varrho^\mM(x,y) \leq \varrho^\mMp(x',y') \fand (Z_1 \cup Z_2)(y,y'). \label{eq: HJEAJ 4}
\end{equation}
Without loss of generality, assume that $Z_1(x,x') \leq Z_2(x,x')$. Since $Z_2$ is a fuzzy bisimulation between $\mM$ and $\mMp$, there exists $y' \in \Delta^\mMp$ such that 
\[ Z_2(x,x') \fand \varrho^\mM(x,y) \leq \varrho^\mMp(x',y') \fand Z_2(y,y'). \] 
Thus, 
\begin{eqnarray*}
(Z_1 \cup Z_2)(x,x') \fand \varrho^\mM(x,y) & = & Z_2(x,x') \fand \varrho^\mM(x,y) \\
& \leq & \varrho^\mMp(x',y') \fand Z_2(y,y') \\
& \leq & \varrho^\mMp(x',y') \fand (Z_1 \cup Z_2)(y,y').
\end{eqnarray*}

\item The condition~\eqref{eq: FB3} can be proved analogously.
\myend
\end{itemize}
\end{proof}

\begin{corollary}\label{cor: HFHSJ}
Let $\mM$ be \red{an image-finite} fuzzy Kripke model. 
If $Z$ is the greatest fuzzy auto-bisimulation of $\mM$ and the underlying residuated lattice is \red{linear and} complete, then $Z$ is a fuzzy equivalence relation.
\end{corollary}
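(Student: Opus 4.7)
The plan is to derive each of the three defining conditions of a fuzzy equivalence relation (reflexivity, symmetry, transitivity) by combining the maximality of $Z$ with the closure properties of fuzzy auto-bisimulations already established in Proposition~\ref{prop: HFHSJ}.

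First, for reflexivity, I would invoke assertion~\ref{ass: HFHSJ 1} of Proposition~\ref{prop: HFHSJ}: the identity function $I$ on $\Delta^\mM$ (sending $\tuple{x,x}$ to $1$ and everything else to $0$) is a fuzzy auto-bisimulation of $\mM$. Since $Z$ is the greatest such, $I \leq Z$, so in particular $1 = I(x,x) \leq Z(x,x)$, forcing $Z(x,x) = 1$ for every $x \in \Delta^\mM$.

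Next, for symmetry, I would apply assertion~\ref{ass: HFHSJ 2}: the converse $Z^-$ is also a fuzzy auto-bisimulation of $\mM$ (since $\mMp = \mM$ here). By maximality of $Z$ we get $Z^- \leq Z$, i.e.\ $Z(y,x) \leq Z(x,y)$ for all $x,y \in \Delta^\mM$. Swapping the roles of $x$ and $y$ yields the reverse inequality, so $Z(x,y) = Z(y,x)$.

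Finally, for transitivity, I would use assertion~\ref{ass: HFHSJ 3}, which is applicable precisely because the underlying residuated lattice is assumed complete: the composition $Z \circ Z$ is a fuzzy auto-bisimulation of $\mM$. By maximality, $Z \circ Z \leq Z$, which unfolds to
\[
\sup\{Z(x,y) \fand Z(y,z) \mid y \in \Delta^\mM\} \leq Z(x,z)
\]
for all $x,z \in \Delta^\mM$, and hence $Z(x,y) \fand Z(y,z) \leq Z(x,z)$ for all $x,y,z$. This is the required transitivity condition. There is no genuine obstacle here; the only subtlety worth flagging is that completeness is needed both to make the composition $Z \circ Z$ well-defined in Definition~\ref{def: HDJAK} (via the supremum) and to legitimately invoke assertion~\ref{ass: HFHSJ 3}, so the hypothesis of the corollary is used in exactly the right place.
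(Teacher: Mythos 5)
Your proposal is correct and follows exactly the same route as the paper's own proof: reflexivity from assertion~\ref{ass: HFHSJ 1} plus maximality, symmetry from $Z^- \leq Z$ via assertion~\ref{ass: HFHSJ 2}, and transitivity from $Z \circ Z \leq Z$ via assertion~\ref{ass: HFHSJ 3}. The only difference is that you spell out the unfolding of $Z \circ Z \leq Z$ into the pointwise transitivity condition, which the paper leaves implicit.
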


\begin{proof}
Suppose that $Z$ is the greatest fuzzy auto-bisimulation of $\mM$ and the underlying residuated lattice is complete. 
By the assertion~\ref{ass: HFHSJ 1} of Proposition~\ref{prop: HFHSJ}, $Z$ is reflexive. By the assertion~\ref{ass: HFHSJ 2} of Proposition~\ref{prop: HFHSJ}, $Z^-$ is a fuzzy auto-bisimulation of $\mM$. Hence, $Z^- \leq Z$ and $Z$ is symmetric. By the assertion~\ref{ass: HFHSJ 3} of Proposition~\ref{prop: HFHSJ}, $Z \circ Z$ is a fuzzy auto-bisimulation of $\mM$. Hence, $Z \circ Z \leq Z$ and $Z$ is transitive. Therefore, $Z$ is a fuzzy equivalence relation. 
\myend
\end{proof}

%===============================================================================

\section{Invariance Results}
\label{section: ir}

We say that a formula $\varphi$ is {\em invariant} under fuzzy bisimulations w.r.t.\ \fPDLwP if, 
for every fuzzy Kripke models $\mM$ and $\mM'$ that are witnessed w.r.t.\ \fPDLwP and for every fuzzy bisimulation $Z$ between $\mM$ and $ \mM'$, \mbox{$Z(x,x') \leq (\varphi^\mM(x) \fequiv \varphi^\mMp(x'))$} for all $x \in \Delta^\mM$ and $x' \in \Delta^\mMp$. 
%
%Here is the main result of this section: 
 
\begin{theorem} \label{theorem: invariance}
All formulas of \fPDLwP are invariant under fuzzy bisimulations w.r.t.\ \fPDLwP if the underlying residuated lattice $\mL$ satisfies the following conditions: 
\begin{eqnarray}
&& \textrm{if $\cup \notin \Phi$, then $\mL$ is linear;} \label{req: HFJSK 1} \\
&& \textrm{if $\to\ \notin \Phi$ or $? \notin \Phi$, then $\mL$ is a Heyting algebra.} \label{req: HFJSK 2}
\end{eqnarray}
\end{theorem}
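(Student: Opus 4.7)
The plan is to prove by simultaneous structural induction on programs $\alpha$ and formulas $\varphi$ of \fPDLwP the following two invariants, where $Z$ is an arbitrary fuzzy bisimulation between witnessed fuzzy Kripke models $\mM$ and $\mMp$:
\begin{itemize}
\item (Program zig-zag) For all $x\in\Delta^\mM$, $x'\in\Delta^\mMp$, $y\in\Delta^\mM$ there exists $y'\in\Delta^\mMp$ with $Z(x,x')\fand\alpha^\mM(x,y)\le\alpha^\mMp(x',y')\fand Z(y,y')$, and symmetrically;
\item (Formula invariance) $Z(x,x')\le\varphi^\mM(x)\fequiv\varphi^\mMp(x')$.
\end{itemize}
The base cases are the atomic program clause (conditions \eqref{eq: FB2}, \eqref{eq: FB3} of Definition~\ref{def: HDJAK}), the atomic proposition clause \eqref{eq: FB1}, and the constant $a\in L$ (for which $\varphi^\mM\equiv a\equiv\varphi^\mMp$, so the right-hand side is $1$).

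\textbf{Formula cases.} The conjunction and disjunction cases follow from \eqref{fop: GDJSK 120} and \eqref{fop: GDJSK 130}; the restricted implications $a\to\varphi$ and $\varphi\to a$ use \eqref{fop: GDJSK 140} and \eqref{fop: GDJSK 150}; negation $\lnot\varphi=\varphi\to 0$ is a special case. For $[\alpha]\varphi$, to establish the $\fto$-direction of $\fequiv$ it suffices (by the adjoint law~\eqref{fop: GDJSK 00}) to show $Z(x,x')\fand([\alpha]\varphi)^\mM(x)\fand\alpha^\mMp(x',y')\le\varphi^\mMp(y')$ for every $y'$; using the back zig-zag for $\alpha$ one picks $y$ with $Z(x,x')\fand\alpha^\mMp(x',y')\le\alpha^\mM(x,y)\fand Z(y,y')$ and then combines the infimum bound $([\alpha]\varphi)^\mM(x)\le\alpha^\mM(x,y)\fto\varphi^\mM(y)$ via \eqref{fop: GDJSK 60} with the induction hypothesis $Z(y,y')\le\varphi^\mM(y)\fto\varphi^\mMp(y')$. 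The case $\tuple{\alpha}\varphi$ is dual, invoking witnessedness to pick a witness $y$ realising the supremum, then the forth zig-zag to transfer it.

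\textbf{Program cases.} For $\alpha\circ\beta$ one uses witnessedness of $\mM$ to obtain an intermediate $z$ realising $(\alpha\circ\beta)^\mM(x,y)$, applies the IH zig-zag for $\alpha$ to obtain $z'$, then for $\beta$ to obtain $y'$. For $\alpha^*$ one argues by induction on the length $n$ of the path through $\mM$ chosen by witnessedness. For $\alpha\cup\beta$ (allowed when $\cup\notin\Phi$) one needs a single $y'$ even though $(\alpha\cup\beta)^\mM(x,y)=\alpha^\mM(x,y)\lor\beta^\mM(x,y)$ could in principle fail to equal either disjunct; linearity of $\mL$ rescues the argument because the join coincides with one of the two arguments, allowing reuse of the zig-zag for that program. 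For the test $\varphi?$ (allowed when $?\notin\Phi$) one sets $y'=x'$ when $x=y$; the required inequality $Z(x,x')\fand\varphi^\mM(x)\le\varphi^\mMp(x')\fand Z(x,x')$ is where \eqref{fop: GDJSK 170} (the Heyting hypothesis) is used, converting the IH $Z(x,x')\le\varphi^\mM(x)\fequiv\varphi^\mMp(x')$ into $Z(x,x')\fand\varphi^\mM(x)=Z(x,x')\fand\varphi^\mMp(x')$.

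\textbf{Full implication and expected obstacles.} The case $\varphi\to\psi$ (allowed when $\to\notin\Phi$) is handled by \eqref{fop: GDJSK 160}, which requires $\mL$ to be a Heyting algebra; this is precisely where the second hypothesis is invoked. I expect the main technical obstacle to be the $\alpha^*$ clause, since it couples witnessedness with an inner induction on path length, and the inductive bounds must be combined by iterated use of \eqref{fop: GDJSK 10}; the union case is conceptually subtle but its reduction to linearity is short. The modal clauses $[\alpha]\varphi$ and $\tuple{\alpha}\varphi$ are the workhorses but are routine once the program zig-zag is established, because they reduce via the adjoint law to manipulations already encapsulated in Lemma~\ref{lemma: JHFJW}.
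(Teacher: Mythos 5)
Your proposal is correct and follows essentially the same route as the paper: the theorem is reduced to a simultaneous structural induction on programs and formulas establishing exactly the zig-zag and invariance properties you state, with linearity used for $\cup$, witnessedness for $\circ$, ${}^*$ and the modal clauses, and the Heyting hypothesis entering through~\eqref{fop: GDJSK 160} and~\eqref{fop: GDJSK 170} for $\to$ and $?$ respectively. The only cosmetic difference is in the $[\alpha]\varphi$ case, where you bound the infimum pointwise over all $y'$ via the adjoint law while the paper uses witnessedness to pick a $y'$ realizing the infimum and then applies a chain of residuation laws; both work.
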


This theorem is an immediate consequence of the following lemma.

\begin{lemma} \label{lemma: GDHAW}
Let $\mM$ and $\mM'$ be fuzzy Kripke models that are witnessed w.r.t.\ \fPDLwP and $Z$ a~fuzzy bisimulation between $\mM$ and $\mM'$. Suppose that the underlying residuated lattice satisfies the conditions~\eqref{req: HFJSK 1} and~\eqref{req: HFJSK 2}. Then, the following properties hold for every formula $\varphi$ of \fPDLwP, every program $\alpha$ of \fPDLwP and every possible values of the free variables:
\begin{eqnarray}
\!\!\!\!\!\!\!\!\!\!&& Z(x,x') \leq (\varphi^\mM(x) \fequiv \varphi^\mMp(x')) \label{eq: GDHAW 1} \\[0.5ex]
\!\!\!\!\!\!\!\!\!\!&& \E y' \in \Delta^\mMp\ (Z(x,x') \fand \alpha^\mM(x,y) \leq \alpha^\mMp(x',y') \fand Z(y,y')) \label{eq: GDHAW 2} \\
\!\!\!\!\!\!\!\!\!\!&& \E y \in \Delta^\mM\ (Z(x,x') \fand \alpha^\mMp(x',y') \leq \alpha^\mM(x,y) \fand Z(y,y')). \label{eq: GDHAW 3} 
\end{eqnarray}
\end{lemma}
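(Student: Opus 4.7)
The plan is to prove the three properties~\eqref{eq: GDHAW 1}--\eqref{eq: GDHAW 3} simultaneously by structural induction on the combined complexity of the formula $\varphi$ and the program $\alpha$. The base cases are handled directly by the bisimulation conditions of Definition~\ref{def: HDJAK}: an atomic proposition $\varphi = p$ is covered by~\eqref{eq: FB1}; a truth constant $\varphi = a$ is trivial since $a^\mM(x) = a = a^\mMp(x')$ gives $(a^\mM(x) \fequiv a^\mMp(x')) = 1$; an atomic program $\alpha = \varrho$ is covered by~\eqref{eq: FB2} and~\eqref{eq: FB3}.

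For the inductive step on formulas, the Boolean connectives $\varphi \land \psi$ and $\varphi \lor \psi$ follow from the inductive hypothesis combined with items~\eqref{fop: GDJSK 120} and~\eqref{fop: GDJSK 130} of Lemma~\ref{lemma: JHFJW}; the restricted implications $a \to \varphi$ and $\varphi \to a$ (and $\lnot\varphi$ as $\varphi \to 0$) use items~\eqref{fop: GDJSK 140} and~\eqref{fop: GDJSK 150}; the full implication $\varphi \to \psi$, available only when $\to\,\notin\Phi$, uses~\eqref{fop: GDJSK 160}, which is exactly where the Heyting-algebra hypothesis~\eqref{req: HFJSK 2} is invoked. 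The modal cases are the first to exploit witnessedness. For $\tuple{\alpha}\varphi$, pick $y^* \in \Delta^\mM$ realizing $(\tuple{\alpha}\varphi)^\mM(x) = \alpha^\mM(x,y^*) \fand \varphi^\mM(y^*)$; apply IH~\eqref{eq: GDHAW 2} on $\alpha$ to obtain a matching $y' \in \Delta^\mMp$, then use IH~\eqref{eq: GDHAW 1} on $\varphi$ combined with~\eqref{fop: GDJSK 60} to get $Z(y^*, y') \fand \varphi^\mM(y^*) \leq \varphi^\mMp(y')$, and chain to conclude $Z(x,x') \fand (\tuple{\alpha}\varphi)^\mM(x) \leq (\tuple{\alpha}\varphi)^\mMp(x')$; the converse uses~\eqref{eq: GDHAW 3}, and the adjunction~\eqref{fop: GDJSK 00} supplies the $\fequiv$. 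The case $[\alpha]\varphi$ is dual: pick $y'^* \in \Delta^\mMp$ realizing the infimum of $\alpha^\mMp(x',y') \fto \varphi^\mMp(y')$ by witnessedness of $\mMp$, use IH~\eqref{eq: GDHAW 3} on $\alpha$ to produce a $y \in \Delta^\mM$, and again combine IH on $\varphi$ with~\eqref{fop: GDJSK 60}.

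For the inductive step on programs, the case $\alpha \circ \beta$ is handled by picking an intermediate witness $z^* \in \Delta^\mM$ realizing the defining supremum and then chaining IH~\eqref{eq: GDHAW 2} for $\alpha$ (yielding $z'$) and for $\beta$ (yielding $y'$); the converse~\eqref{eq: GDHAW 3} is symmetric. For $\alpha^*$, witnessedness gives a finite path $x = x_0, x_1, \ldots, x_n = y$ realizing the supremum, and an inductive application of IH~\eqref{eq: GDHAW 2} along this path, using the monotonicity rule~\eqref{fop: GDJSK 10} to combine the successive inequalities $Z(x_i,x'_i) \fand \alpha^\mM(x_i,x_{i+1}) \leq \alpha^\mMp(x'_i,x'_{i+1}) \fand Z(x_{i+1},x'_{i+1})$, produces a matching path $x' = x'_0, \ldots, x'_n$ in $\mMp$, so we take $y' = x'_n$. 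The union case $\alpha \cup \beta$ (permitted only when $\cup \notin \Phi$) is reduced to the larger of the two operands by linearity~\eqref{req: HFJSK 1}, after which the IH on that operand applies directly. The test case $\varphi?$ (permitted only when $? \notin \Phi$) needs the idempotence $\fand = \land$ supplied by~\eqref{req: HFJSK 2}: in the only nontrivial subcase $x = y$, take $y' = x'$ and derive $Z(x,x') \fand \varphi^\mM(x) = Z(x,x') \land \varphi^\mM(x) \leq Z(x,x') \land \varphi^\mMp(x') = Z(x,x') \fand \varphi^\mMp(x')$ from IH~\eqref{eq: GDHAW 1} on $\varphi$.

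I expect the technically heaviest part to be the interlocking modal and Kleene-star cases, since that is where the simultaneous induction must absorb an unbounded appeal to witnesses (one per nested sup or inf, and one per step along an $\alpha^*$-path) and where witnessedness is indispensable. Everything else is essentially bookkeeping over the ready-made identities of Lemma~\ref{lemma: JHFJW}, with each syntactic restriction in $\Phi$ matched by exactly one algebraic hypothesis in~\eqref{req: HFJSK 1}--\eqref{req: HFJSK 2}.
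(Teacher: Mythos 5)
Your proposal is correct and follows essentially the same route as the paper's proof: a simultaneous structural induction on formulas and programs, with the propositional cases dispatched by the corresponding items of Lemma~\ref{lemma: JHFJW}, witnessedness supplying the maximizing/minimizing elements in the modal, composition and star cases, linearity handling $\cup$, and the Heyting hypothesis handling $?$ and full implication. The only divergence is cosmetic: in the $\tuple{\alpha}\varphi$, $[\alpha]\varphi$ and test cases you chain the inequalities directly on the product side via~\eqref{fop: GDJSK 60} and idempotence of $\land$, where the paper shuffles residuals using~\eqref{fop: GDJSK 80}, \eqref{fop: GDJSK 110}, \eqref{fop: GDJSK 70a}, \eqref{fop: GDJSK 90} and~\eqref{fop: GDJSK 170}; both variants are valid.
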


\begin{proof}
	We prove this lemma by induction on the structures of $\varphi$ and~$\alpha$.
	First, consider the assertion~\eqref{eq: GDHAW 2}. Let $x,y \in \Delta^\mM$ and $x' \in \Delta^\mMp$. It suffices to show that there exists $y' \in \Delta^\mMp$ such that
	\begin{equation}\label{eq: HDHAK}
	Z(x,x') \fand \alpha^\mM(x,y) \leq \alpha^\mMp(x',y') \fand Z(y,y').
	\end{equation}
	The base case occurs when $\alpha$ is an atomic program and follows from~\eqref{eq: FB2}. The induction steps are given below.
	\begin{itemize}
		\item Case $\alpha = (\psi?)$ (and $? \notin \Phi$): If $x \neq y$, then $\alpha^\mM(x,y) = 0$ and, by~\eqref{fop: GDJSK 40}, the assertion~\eqref{eq: HDHAK} clearly holds. Suppose $x = y$ and take $y' = x'$. By the induction assumption \red{about}~\eqref{eq: GDHAW 1}, \mbox{$Z(x,x') \leq (\psi^\mM(x) \fequiv \psi^\mMp(x'))$}. Hence, by~\eqref{fop: GDJSK 170},  
		\[ Z(x,x') \fand \psi^\mM(x) = Z(x,x') \fand \psi^\mMp(x'), \] 
		which implies \eqref{eq: HDHAK}.
		
		\item Case $\alpha = \beta \cup \gamma$ (and $\cup \notin \Phi$): Without loss of generality, suppose $\beta^\mM(x,y) \geq \gamma^\mM(x,y)$. Thus, $\alpha^\mM(x,y) = \beta^\mM(x,y)$. By the induction assumption of~\eqref{eq: GDHAW 2}, there exists $y' \in \Delta^\mMp$ such that 
		\[ Z(x,x') \fand \beta^\mM(x,y) \leq \beta^\mMp(x',y') \fand Z(y,y'). \] 
		Thus, 
		\begin{eqnarray*}
		Z(x,x') \fand \alpha^\mM(x,y) & = & Z(x,x') \fand \beta^\mM(x,y)\\
		& \leq & \beta^\mMp(x',y') \fand Z(y,y')\\
		& \leq & \alpha^\mMp(x',y') \fand Z(y,y').
		\end{eqnarray*}

		\item Case $\alpha = \beta \circ \gamma$: Since $\mM$ is witnessed w.r.t.\ \fPDLwP, there exists $z \in \Delta^\mM$ such that \mbox{$\alpha^\mM(x,y) = \beta^\mM(x,z) \fand \gamma^\mM(z,y)$}. By the induction assumption of~\eqref{eq: GDHAW 2}, there exist $z'$ and $y'$ such that:
		\begin{eqnarray*}
			Z(x,x') \fand \beta^\mM(x,z) & \leq & \beta^\mMp(x',z') \fand Z(z,z') \\
			Z(z,z') \fand \gamma^\mM(z,y) & \leq & \gamma^\mMp(z',y') \fand Z(y,y').
		\end{eqnarray*}
		Since $\fand$ is associative \red{and due to~\eqref{fop: GDJSK 10}}, it follows that 
		\begin{eqnarray*}
			Z(x,x') \fand \alpha^\mM(x,y) 
			& = & Z(x,x') \fand \beta^\mM(x,z) \fand \gamma^\mM(z,y) \\
			& \leq & \beta^\mMp(x',z') \fand Z(z,z') \fand \gamma^\mM(z,y) \\
			& \leq & \beta^\mMp(x',z') \fand \gamma^\mMp(z',y') \fand Z(y,y') \\
			& \leq & \alpha^\mMp(x',y') \fand Z(y,y').
		\end{eqnarray*}
		
		\item Case $\alpha = \beta^*$: Since $\mM$ is witnessed w.r.t.\ \fPDLwP, there exist $x_0, \ldots, x_k \in \Delta^\mM$ such that $x_0 = x$, $x_k = y$ and 
		\[ \alpha^\mM(x,y) = \beta^\mM(x_0,x_1) \fand\cdots\fand \beta^\mM(x_{k-1},x_k). \]
		Let $x'_0 = x'$. By the induction assumption of~\eqref{eq: GDHAW 2}, there exist $x'_1,\ldots,x'_k \in \Delta^\mMp$ such that 
		\[ Z(x_i,x'_i) \fand \beta^\mM(x_i,x_{i+1}) \leq \beta^\mMp(x'_i,x'_{i+1}) \fand Z(x_{i+1},x'_{i+1}) \]
		for all $0 \leq i < k$. Since $\fand$ is associative \red{and due to~\eqref{fop: GDJSK 10}}, it follows that 
		\begin{eqnarray*}
			\!\!\!\!\!& \!\! & Z(x_0,x'_0) \fand \alpha^\mM(x_0,x_k) \\
			\!\!\!\!\!& \!=\! & Z(x_0,x'_0) \fand \beta^\mM(x_0,x_1) \fand\cdots\fand \beta^\mM(x_{k-1},x_k) \\ 
			\!\!\!\!\!& \!\leq\! & \beta^\mMp(x'_0,x'_1) \fand Z(x_1,x'_1) \fand \beta^\mM(x_1,x_2) \fand \cdots\fand \beta^\mM(x_{k-1},x_k) \\
			\!\!\!\!\!& \!\leq\! & \beta^\mMp(x'_0,x'_1) \fand \beta^\mMp(x'_1,x'_2) \fand Z(x_2,x'_2) \fand \beta^\mM(x_2,x_3) \fand \cdots\fand \beta^\mM(x_{k-1},x_k) \\
			\!\!\!\!\!& \!\leq\! & \ldots \\
			\!\!\!\!\!& \!\leq\! & \beta^\mMp(x'_0,x'_1) \fand\cdots\fand \beta^\mMp(x'_{k-1},x'_k) \fand Z(x_k,x'_k) \\ 
			\!\!\!\!\!& \!\leq\! & \alpha^\mMp(x'_0,x'_k) \fand Z(x_k,x'_k).
		\end{eqnarray*}
		Taking $y' = x'_k$, we obtain~\eqref{eq: HDHAK}.
	\end{itemize}
	
	The assertion~\eqref{eq: GDHAW 3} can be proved analogously as for~\eqref{eq: GDHAW 2}.
	
	Consider the assertion~\eqref{eq: GDHAW 1}. 
	The case when $\varphi = a$ is trivial. 
	The case when $\varphi = p$ follows from the condition~\eqref{eq: FB1}. 
	The case when $\varphi = \lnot \psi$ is reduced to the case when $\varphi = (\psi \to 0)$. 
	\begin{itemize}
		\item Case $\varphi = \psi \land \xi$: We have $\varphi^\mM(x) = \psi^\mM(x) \land \xi^\mM(x)$ and \mbox{$\varphi^\mMp(x') = \psi^\mMp(x') \land \xi^\mMp(x')$}. By the induction assumption of~\eqref{eq: GDHAW 1}, 
		\begin{eqnarray}
		Z(x,x') & \leq & \psi^\mM(x) \fequiv \psi^\mMp(x') \label{eq: JDGSH 1} \\
		Z(x,x') & \leq & \xi^\mM(x) \fequiv \xi^\mMp(x'). \label{eq: JDGSH 2}
		\end{eqnarray}
		By~\eqref{fop: GDJSK 120}, 
		\begin{eqnarray}
		\!\!\!\!\!\!\!\!\!\!&\!\!\!\!\!& 
		(\psi^\mM(x) \fequiv \psi^\mMp(x')) \land (\xi^\mM(x) \fequiv \xi^\mMp(x')) \leq (\varphi^\mM(x) \fequiv \varphi^\mMp(x')). \label{eq: JDGSH 3}
		\end{eqnarray}
		The assertion~\eqref{eq: GDHAW 1} follows from~\eqref{eq: JDGSH 1}, \eqref{eq: JDGSH 2} and~\eqref{eq: JDGSH 3}. 
		
		\item The case $\varphi = (\psi \lor \xi)$ is similar to the previous case, using~\eqref{fop: GDJSK 130} instead of~\eqref{fop: GDJSK 120}.  
		
		\item Case $\varphi = (\psi \to \xi)$ (and $\to\ \notin \Phi$): The proof is similar to the proof of the case when $\varphi = \psi \land \xi$, using~\eqref{fop: GDJSK 160} instead of~\eqref{fop: GDJSK 120}.  
		
		\item Case $\varphi = (a \to \psi)$: We have $\varphi^\mM(x) = (a \fto \psi^\mM(x))$ and \mbox{$\varphi^\mMp(x') = (a \fto \psi^\mMp(x'))$}. By the induction assumption of~\eqref{eq: GDHAW 1}, 
		\( Z(x,x') \leq (\psi^\mM(x) \fequiv \psi^\mMp(x')). \)
		The assertion~\eqref{eq: GDHAW 1} follows from this and~\eqref{fop: GDJSK 140}.  

		\item The case $\varphi = (\psi \to a)$ is similar to the previous case, using~\eqref{fop: GDJSK 150} instead of~\eqref{fop: GDJSK 140}.

		\item Case $\varphi = \tuple{\alpha}\psi$: Since $\mM$ is witnessed w.r.t.\ \fPDLwP, there exists $y \in \Delta^\mM$ such that 
		\begin{equation}
		\varphi^\mM(x) = \alpha^\mM(x,y) \fand \psi^\mM(y). \label{eq: HJKAQ 1}
		\end{equation}
		By the induction assumption \red{about}~\eqref{eq: GDHAW 2}, there exists $y' \in \Delta^\mMp$ such that 
		\begin{equation}
		Z(x,x') \fand \alpha^\mM(x,y) \leq \alpha^\mMp(x',y') \fand Z(y,y'). \label{eq: HJKAQ 2}
		\end{equation}
		By definition, 
		\begin{equation}
		\alpha^\mMp(x',y') \fand \psi^\mMp(y') \leq \varphi^\mMp(x'). \label{eq: HJKAQ 2a}
		\end{equation}
		
		By the induction assumption of~\eqref{eq: GDHAW 1}, 
		\begin{equation}
		Z(y,y') \ \leq\ \psi^\mM(y) \fequiv \psi^\mMp(y'). \label{eq: HJKAQ 3}
		\end{equation}
		
		By \eqref{eq: HJKAQ 2} and \eqref{fop: GDJSK 00},  
		\[ Z(x,x') \ \leq\ \alpha^\mM(x,y) \fto Z(y,y') \fand \alpha^\mMp(x',y'). \]
		By \eqref{eq: HJKAQ 3}, \eqref{fop: GDJSK 10} and \eqref{fop: GDJSK 20}, it follows that 
		\[ Z(x,x') \ \leq\ \alpha^\mM(x,y) \fto (\psi^\mM(y) \fequiv \psi^\mMp(y')) \fand \alpha^\mMp(x',y'). \]
		Since $\fand$ is commutative, by \eqref{fop: GDJSK 80} and \eqref{fop: GDJSK 20}, it follows that 
		\[ Z(x,x') \ \leq\ \alpha^\mM(x,y) \fto (\psi^\mM(y) \fequiv \alpha^\mMp(x',y') \fand \psi^\mMp(y')). \]
		By \eqref{fop: GDJSK 110}, it follows that 
		\[ Z(x,x') \ \leq\  \alpha^\mM(x,y) \fand \psi^\mM(y) \fto \alpha^\mMp(x',y') \fand \psi^\mMp(y'). \]
		By \eqref{eq: HJKAQ 1}, \eqref{eq: HJKAQ 2a} and \eqref{fop: GDJSK 20}, it follows that 
		\[ Z(x,x') \ \leq\ \varphi^\mM(x) \fto \varphi^\mMp(x'). \]
		Analogously, it can be shown that 
		\[ Z(x,x') \ \leq\ \varphi^\mMp(x') \fto \varphi^\mM(x). \]
		Therefore, 
		\[ Z(x,x') \ \leq\ \varphi^\mM(x) \fequiv \varphi^\mMp(x'). \]

		\item Case $\varphi = [\alpha]\psi$: Since $\mMp$ is witnessed w.r.t.\ \fPDLwP, there exists $y' \in \Delta^\mMp$ such that 
		\begin{equation}
		\varphi^\mMp(x') = (\alpha^\mMp(x',y') \fto \psi^\mMp(y')). \label{eq: KDNSJ 1}
		\end{equation}
		By the induction assumption \red{about}~\eqref{eq: GDHAW 3}, there exists $y \in \Delta^\mM$ such that 
		\begin{equation}
		Z(x,x') \fand \alpha^\mMp(x',y') \leq \alpha^\mM(x,y) \fand Z(y,y'). \label{eq: KDNSJ 2}
		\end{equation}
		By definition, 
		\begin{equation}
		\varphi^\mM(x) \ \leq\ \alpha^\mM(x,y) \fto \psi^\mM(y). \label{eq: KDNSJ 2a}
		\end{equation}
		By the induction assumption of~\eqref{eq: GDHAW 1}, 
		\begin{equation}
		Z(y,y') \ \leq\ \psi^\mM(y) \fequiv \psi^\mMp(y'). \label{eq: KDNSJ 3}
		\end{equation}
		
		By \eqref{eq: KDNSJ 2} and \eqref{fop: GDJSK 00}, 
		\[ Z(x,x') \ \leq\ \alpha^\mMp(x',y') \fto Z(y,y') \fand \alpha^\mM(x,y). \]
		By \eqref{eq: KDNSJ 3}, \eqref{fop: GDJSK 10} and \eqref{fop: GDJSK 20}, it follows that 
		\[ Z(x,x') \ \leq\ \alpha^\mMp(x',y') \fto (\psi^\mM(y) \fequiv \psi^\mMp(y')) \fand \alpha^\mM(x,y). \]
		Since $\fand$ is commutative, by \eqref{fop: GDJSK 70a} and \eqref{fop: GDJSK 20}, it follows that 
		\[ Z(x,x') \ \leq\ \alpha^\mMp(x',y') \fto ((\alpha^\mM(x,y) \fto \psi^\mM(y)) \fto \psi^\mMp(y')). \]
		By \eqref{fop: GDJSK 90}, it follows that 
		\[ Z(x,x') \ \leq\ (\alpha^\mM(x,y) \fto \psi^\mM(y)) \fto (\alpha^\mMp(x',y') \fto \psi^\mMp(y')). \]
		By \eqref{eq: KDNSJ 2a}, \eqref{eq: KDNSJ 1} and \eqref{fop: GDJSK 20}, it follows that 
		\[ Z(x,x') \ \leq\ \varphi^\mM(x) \fto \varphi^\mMp(x'). \]
		Analogously, it can be shown that 
		\[ Z(x,x') \ \leq\ \varphi^\mMp(x') \fto \varphi^\mM(x). \]
		Therefore, 
		\[ Z(x,x') \ \leq\ \varphi^\mM(x) \fequiv \varphi^\mMp(x'). \]
		This completes the proof.
		\myend
	\end{itemize}
\end{proof}

The following lemma is a counterpart of Lemma~\ref{lemma: GDHAW} for \fKz.

\begin{lemma} \label{lemma: GDHAW 2}
Let $\mM$ and $\mM'$ be fuzzy Kripke models that are witnessed w.r.t.\ \fKz and $Z$ a fuzzy bisimulation between $\mM$ and $ \mM'$. Then, the following \red{property holds} for every $x \in \Delta^\mM$, $x' \in \Delta^\mMp$ and every formula $\varphi$ of \fKz:
\[ Z(x,x') \ \leq\ \varphi^\mM(x) \fequiv \varphi^\mMp(x'). \]
\end{lemma}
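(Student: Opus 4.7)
The plan is to mimic the proof of Lemma~\ref{lemma: GDHAW}, restricted to the formula constructors available in \fKz. Since \fKz has $\Phi = \{\cup, \to, ?\}$, all three constructors are excluded, so the preconditions \eqref{req: HFJSK 1} and~\eqref{req: HFJSK 2} of Theorem~\ref{theorem: invariance} are vacuous; no linearity or Heyting algebra assumption on $\mL$ is required. Moreover, since \fKz disallows $\alpha \circ \beta$ and $\alpha^*$, only atomic programs appear in diamond formulas, so no separate induction over program structure is needed, and the existence of witnesses for the relevant suprema is guaranteed by the assumption that $\mM$ and $\mMp$ are witnessed w.r.t.\ \fKz.

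I would proceed by structural induction on $\varphi$. The base case $\varphi = a$ is immediate since $a^\mM(x) = a^\mMp(x') = a$, so the right-hand side equals $1$. The base case $\varphi = p$ for $p \in \SP$ is exactly condition~\eqref{eq: FB1}. For $\varphi = \varphi_1 \land \varphi_2$, apply the induction hypothesis to each conjunct and combine using~\eqref{fop: GDJSK 120}. For $\varphi = a \to \varphi_1$, the claim follows from the induction hypothesis applied to $\varphi_1$ together with~\eqref{fop: GDJSK 140}; symmetrically, $\varphi = \varphi_1 \to a$ uses~\eqref{fop: GDJSK 150}. These cases are verbatim restrictions of the corresponding cases in the proof of Lemma~\ref{lemma: GDHAW}.

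The only substantive case is $\varphi = \tuple{\varrho}\psi$ with $\varrho \in \SA$. Since $\mM$ is witnessed w.r.t.\ \fKz, pick $y \in \Delta^\mM$ with $\varphi^\mM(x) = \varrho^\mM(x,y) \fand \psi^\mM(y)$. By~\eqref{eq: FB2}, choose $y' \in \Delta^\mMp$ with $Z(x,x') \fand \varrho^\mM(x,y) \leq \varrho^\mMp(x',y') \fand Z(y,y')$. By the induction hypothesis, $Z(y,y') \leq (\psi^\mM(y) \fequiv \psi^\mMp(y'))$. Following exactly the chain of inequalities from the $\tuple{\alpha}\psi$ case of Lemma~\ref{lemma: GDHAW} (using~\eqref{fop: GDJSK 00}, \eqref{fop: GDJSK 10}, \eqref{fop: GDJSK 20}, the commutativity of $\fand$, \eqref{fop: GDJSK 80}, and \eqref{fop: GDJSK 110}), one deduces $Z(x,x') \leq (\varphi^\mM(x) \fto \varphi^\mMp(x'))$. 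Symmetrically, using that $\mMp$ is witnessed and applying~\eqref{eq: FB3}, one obtains $Z(x,x') \leq (\varphi^\mMp(x') \fto \varphi^\mM(x))$. Conjoining the two yields the desired inequality.

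The hard part is merely reusing the diamond calculation of Lemma~\ref{lemma: GDHAW}; however, it is genuinely easier here, because the atomicity of $\varrho$ makes the witnesses $y, y'$ available directly from the definition of witnessedness and from~\eqref{eq: FB2}/\eqref{eq: FB3}, without recursing through compound programs. Consequently the proof reduces to a strict sub-proof of that of Lemma~\ref{lemma: GDHAW}.
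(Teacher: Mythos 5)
Your proof is correct and is exactly the route the paper takes: the paper itself states that Lemma~\ref{lemma: GDHAW 2} is proved as a simplification of the proof of Lemma~\ref{lemma: GDHAW}, using \eqref{eq: FB2} and \eqref{eq: FB3} directly for the atomic programs in place of the inductive assertions \eqref{eq: GDHAW 2} and \eqref{eq: GDHAW 3}. Your observation that the hypotheses \eqref{req: HFJSK 1} and \eqref{req: HFJSK 2} are vacuous for \fKz (since $\cup$, $\to$ and $?$ are all excluded) correctly explains why no linearity or Heyting-algebra assumption is needed here.
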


This lemma can be proved analogously as done for the assertion~\eqref{eq: GDHAW 1} of Lemma~\ref{lemma: GDHAW}, by using~\eqref{eq: FB2} and \eqref{eq: FB3} instead of~\eqref{eq: GDHAW 2} and~\eqref{eq: GDHAW 3}, respectively. Roughly speaking, the proof is a simplification of the proof of Lemma~\ref{lemma: GDHAW}. 

\begin{remark}\label{remark: FDNBS}
Analyzing the proof of Lemma~\ref{lemma: GDHAW}, it can be seen that the condition~\eqref{req: HFJSK 2} ($\mL$ is a Heyting algebra if $\to\ \notin \Phi$ or $? \notin \Phi$) can be replaced by the conditions~\eqref{fop: GDJSK 160} and~\eqref{fop: GDJSK 170}. This also applies to Theorem~\ref{theorem: invariance}. 
\myend
\end{remark}

\begin{remark}
To justify that a condition like~\eqref{req: HFJSK 2} (or~\eqref{fop: GDJSK 160} and~\eqref{fop: GDJSK 170} together) is essential for Lemma~\ref{lemma: GDHAW} and Theorem~\ref{theorem: invariance}, we show that, if $\to\ \notin \Phi$ or $? \notin \Phi$, $L = [0,1]$ and $\fand$ is the {\L}ukasiewicz or product t-norm, then there exist finite fuzzy Kripke models $\mM$ and $\mM'$, a~fuzzy bisimulation $Z$ between $\mM$ and $\mM'$, $x \in \Delta^\mM$, $x' \in \Delta^\mMp$ and a formula $\varphi$ of \fPDLwP such that $Z(x,x') \not\leq (\varphi^\mM(x) \fequiv \varphi^\mMp(x'))$. Let
\begin{itemize}
\item $\SA = \emptyset$,\ \ $\SP = \{p,q\}$,\ \ $\varphi = (p \to q)$,\ \ $\psi = [p?]q$, 
\item $\Delta^\mM = \{v\}$,\ \ $p^\mM = \{v\!:\!0.2\}$,\ \ $q^\mM = \{v\!:\!0.2\}$,  
\item $\Delta^\mMp = \{v'\}$,\ \ $p^\mMp = \{v'\!:\!0.3\}$,\ \ $q^\mMp = \{v'\!:\!0.1\}$, 
\item $\fand$ be the {\L}ukasiewicz or product t-norm, 
\item $Z$ be the greatest fuzzy bisimulation between $\mM$ and $\mMp$.   
\end{itemize}
If $\fand$ is the {\L}ukasiewicz t-norm, then 
\begin{itemize}
\item $Z(v,v') = \min\{(p^\mM(v) \fequiv p^\mMp(v')), (q^\mM(v) \fequiv q^\mMp(v'))\} = 0.9$, 
\item $\varphi^\mM(v) = \psi^\mM(v) = 1$,\ \ $\varphi^\mMp(v') = \psi^\mMp(v') = 0.8$, 
\item $(\varphi^\mM(v) \fequiv \varphi^\mMp(v')) = (\psi^\mM(v) \fequiv \psi^\mMp(v')) = 0.8$. 
\end{itemize}
If $\fand$ is the product t-norm, then
\begin{itemize}
\item $Z(v,v') = \min\{(p^\mM(v) \fequiv p^\mMp(v')), (q^\mM(v) \fequiv q^\mMp(v'))\} = 0.5$, 
\item $\varphi^\mM(v) = \psi^\mM(v) = 1$,\ \ $\varphi^\mMp(v') = \psi^\mMp(v') = 1/3$. 
\item $(\varphi^\mM(v) \fequiv \varphi^\mMp(v')) = (\psi^\mM(v) \fequiv \psi^\mMp(v')) = 1/3$. 
\end{itemize}
Hence, $Z(v,v') \not\leq (\varphi^\mM(v) \fequiv \varphi^\mMp(v'))$ and $Z(v,v') \not\leq (\psi^\mM(v) \fequiv \psi^\mMp(v'))$.
\myend
\end{remark}

\begin{remark}\label{remark: HDJHS}\markRed
Consider the formula constructor $\varphi \,\&\, \psi$ whose meaning in a Kripke model $\mM$ is specified by $(\varphi \,\&\, \psi)^\mM(x) = \varphi^\mM(x) \fand \psi^\mM(x)$ for $x \in \Delta^\mM$. We show that formulas with this constructor may not be invariant under fuzzy bisimulations w.r.t.\ \fPDLwP for $\Phi = \{\cup, \to, ?\}$. 

Let $L = [0,1]$, $\SA = \emptyset$, $\SP = \{p\}$, \mbox{$\varphi = p \,\&\, p$} and let $\mM$ and $\mMp$ be Kripke models such that $\Delta^\mM = \{v\}$, $p^\mM(v) = 0.5$, $\Delta^\mMp = \{v'\}$ and $p^\mMp(v') = 1$. 
Consider the case when $\fand$ is the {\L}ukasiewicz or product t-norm. Observe that $Z: \Delta^\mM \times \Delta^\mMp \to L$ with $Z(v,v') = (0.5 \fequiv 1) = 0.5$ is a fuzzy bisimulation between $\mM$ and $\mMp$. We have $\varphi^\mMp(v') = 1$. 
If $\fand$ is the {\L}ukasiewicz t-norm, then $\varphi^\mM(v) = 0$ and $(\varphi^\mM(v) \fequiv \varphi^\mMp(v')) = 0$. 
If $\fand$ is the product t-norm, then $\varphi^\mM(v) = 0.25$ and $(\varphi^\mM(v) \fequiv \varphi^\mMp(v')) = 0.25$. 
Thus, $Z(v,v') \not\leq (\varphi^\mM(v) \fequiv \varphi^\mMp(v'))$. 
\myend
\end{remark}

%===============================================================================
\section{The Hennessy-Milner Property}
\label{sec: HM-prop}

In this section, we present and prove the Hennessy-Milner property of fuzzy bisimulations. 
It is formulated for the class of modally saturated models, which is larger than the class of image-finite models. 
Our notion of modal saturatedness is a counterpart of the ones given in~\cite{Fine75,BRV2001,FSS2020}.  

A fuzzy Kripke model $\mM$ is said to be {\em modally saturated} (w.r.t.~\fKz and the underlying residuated lattice $\mL$) if, for every $a \in L \setminus \{0\}$, every $x \in \Delta^\mM$, every $\varrho \in \SA$ and every infinite set $\Gamma$ of formulas in~\fKz, if for every finite subset $\Lambda$ of $\Gamma$ there exists $y \in \Delta^\mM$ such that $\varrho^\mM(x,y) \fand \varphi^\mM(y) \geq a$ for all $\varphi \in \Lambda$, then there exists $y \in \Delta^\mM$ such that $\varrho^\mM(x,y) \fand \varphi^\mM(y) \geq a$ for all $\varphi \in \Gamma$. 

\begin{proposition}
All image-finite fuzzy Kripke models are modally saturated.
\end{proposition}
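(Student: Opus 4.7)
The plan is to run a straightforward finiteness/compactness argument, using image-finiteness to restrict attention to finitely many possible witnesses and then combining their "failure formulas" into a single finite subset of $\Gamma$ that contradicts the hypothesis.

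First I would fix an image-finite fuzzy Kripke model $\mM$ together with $a \in L \setminus \{0\}$, $x \in \Delta^\mM$, $\varrho \in \SA$, and an infinite set $\Gamma$ of \fKz-formulas satisfying the finite-subset condition. Let
\[
S \;=\; \{y \in \Delta^\mM \mid \varrho^\mM(x,y) > 0\},
\]
which is finite by image-finiteness. The first observation is that any witness $y$ of the conclusion must lie in $S$: if $\varrho^\mM(x,y) = 0$ then by \eqref{fop: GDJSK 40} we have $\varrho^\mM(x,y) \fand \varphi^\mM(y) = 0$, which cannot be $\geq a$ since $a \neq 0$. The same remark applies to the hypothesis, so the provided witness for every finite $\Lambda \subseteq \Gamma$ may be assumed to lie in $S$.

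Next I would argue by contradiction. Suppose no $y \in S$ satisfies $\varrho^\mM(x,y) \fand \varphi^\mM(y) \geq a$ for \emph{every} $\varphi \in \Gamma$. Then for each $y \in S$ we can pick a formula $\varphi_y \in \Gamma$ with $\varrho^\mM(x,y) \fand \varphi_y^\mM(y) \not\geq a$. Form the finite set
\[
\Lambda \;=\; \{\varphi_y \mid y \in S\} \;\subseteq\; \Gamma.
\]
By the hypothesis applied to $\Lambda$, there exists $y^* \in \Delta^\mM$ such that $\varrho^\mM(x,y^*) \fand \varphi^\mM(y^*) \geq a$ for all $\varphi \in \Lambda$. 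By the initial observation, $y^* \in S$, so $\varphi_{y^*} \in \Lambda$ is defined, giving $\varrho^\mM(x,y^*) \fand \varphi_{y^*}^\mM(y^*) \geq a$. This directly contradicts the choice of $\varphi_{y^*}$, finishing the proof.

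There is no real obstacle here, since the whole argument is a one-line pigeonhole: image-finiteness gives a finite pool $S$, and any counterexample family $\{\varphi_y\}_{y \in S}$ is itself finite and thus already an admissible $\Lambda$. The only point that requires a moment of care is checking that $y^* \in S$, which uses $a \neq 0$ together with \eqref{fop: GDJSK 40}; without the assumption $a \in L \setminus \{0\}$ the argument (and indeed the notion of modal saturatedness itself) would be vacuous.
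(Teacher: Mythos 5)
Your argument is essentially the paper's own proof: both are the same pigeonhole argument, exploiting the finiteness of $S = \{y \in \Delta^\mM \mid \varrho^\mM(x,y) > 0\}$ together with the fact that $a \neq 0$ and \eqref{fop: GDJSK 40} force any witness into $S$. The one place where your version is slightly fragile is the degenerate case $S = \emptyset$: there your $\Lambda = \{\varphi_y \mid y \in S\}$ is empty, the hypothesis applied to $\Lambda = \emptyset$ is vacuously satisfied by every $y^*$, and so the step ``by the initial observation, $y^* \in S$'' does not go through. The paper sidesteps this by throwing an arbitrary $\varphi_0 \in \Gamma$ into $\Lambda$ (possible since $\Gamma$ is infinite, hence non-empty), which keeps $\Lambda$ non-empty and lets the contradiction be derived uniformly for all $y \in \Delta^\mM$; adding the same $\varphi_0$ to your $\Lambda$ repairs the edge case without changing anything else.
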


\begin{proof}
Let $\mM$ be an image-finite fuzzy Kripke model, let $a \in L \setminus \{0\}$, $x \in \Delta^\mM$, $\varrho \in \SA$ and let $\Gamma$ be an infinite set of formulas in~\fKz. Assume that, for every finite subset $\Lambda$ of $\Gamma$, there exists $y \in \Delta^\mM$ such that $\varrho^\mM(x,y) \fand \varphi^\mM(y) \geq a$ for all $\varphi \in \Lambda$. For a contradiction, suppose that, for every $y \in \Delta^\mM$, there exists $\varphi_y \in \Gamma$ such that $\varrho^\mM(x,y) \fand \varphi_y^\mM(y) \not\geq a$. 
Let $\varphi_0$ be an arbitrary formula of $\Gamma$ and let $\Lambda = \{\varphi_y \mid \varrho^\mM(x,y) > 0\} \cup \{\varphi_0\}$. Since $\mM$ is image-finite, $\Lambda$ is finite. 
For every $y \in \Delta^\mM$, if $\varrho^\mM(x,y) = 0$, then by~\eqref{fop: GDJSK 40}, $\varrho^\mM(x,y) \fand \varphi_0^\mM(y) = 0 \not\geq a$, else $\varphi_y \in \Lambda$ and $\varrho^\mM(x,y) \fand \varphi_y^\mM(y) \not\geq a$. 
Hence, for every $y \in \Delta^\mM$, there exists $\varphi \in \Lambda$ such that $\varrho^\mM(x,y) \fand \varphi^\mM(y) \not\geq a$. This contradicts the assumption. 
\myend 
\end{proof}

Let $\mL$ be a complete residuated lattice. We say that the operator $\fand$ is {\em continuous} (w.r.t.\ infima) if, for every $x \in L$ and $Y \subseteq L$, $x \fand \inf Y = \inf \{x \fand y \mid y \in Y\}$. Clearly, all the G\"odel, {\L}ukasiewicz and product t-norms (\red{in particular} when $L$ is the unit interval $[0,1]$) are continuous. In addition, if $\mL$ is a Heyting algebra, then $\fand$ is continuous. 

\begin{theorem} \label{theorem: HM}
Let $\mM$ and $\mMp$ be fuzzy Kripke models that are witnessed w.r.t.\ \fKz and modally saturated.\footnote{These conditions are satisfied, for example, when $\mM$ and $\mMp$ are image-finite.} Suppose that the underlying residuated lattice \mbox{$\mL = \tuple{L, \leq, \fand, \fto, 0, 1}$} is complete and $\fand$ is continuous. Then, the fuzzy relation $Z : \Delta^\mM \times \Delta^\mMp \to L$ specified by 
\[ Z(x,x') = \inf\{\varphi^\mM(x) \fequiv \varphi^\mMp(x') \mid \textrm{$\varphi$ is a formula of \fKz}\} \] 
is the greatest fuzzy bisimulation between $\mM$ and~$\mMp$.	
\end{theorem}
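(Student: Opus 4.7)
The plan is to verify both claims: that $Z$ is a fuzzy bisimulation between $\mM$ and $\mMp$, and that it dominates every such bisimulation. The second (maximality) part is immediate from Lemma~\ref{lemma: GDHAW 2}: if $Z'$ is any fuzzy bisimulation between the (witnessed) models $\mM$ and $\mMp$, then $Z'(x,x') \leq (\varphi^\mM(x) \fequiv \varphi^\mMp(x'))$ for every formula $\varphi$ of \fKz, and taking the infimum over $\varphi$ yields $Z'(x,x') \leq Z(x,x')$. The substance of the proof is therefore to check conditions \eqref{eq: FB1}--\eqref{eq: FB3} for $Z$. Condition \eqref{eq: FB1} is obtained by specializing the defining infimum to $\varphi = p$, and \eqref{eq: FB3} is symmetric to \eqref{eq: FB2}, so I focus on \eqref{eq: FB2}.

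Fix $x,y \in \Delta^\mM$, $x' \in \Delta^\mMp$ and $\varrho \in \SA$, and set $a = Z(x,x') \fand \varrho^\mM(x,y)$. If $a = 0$ then any $y'$ suffices by \eqref{fop: GDJSK 40}, so assume $a > 0$. The idea is to encode the value at $y$ of every formula by a characteristic formula in \fKz: for each formula $\varphi$ of \fKz, let $\chi_\varphi = (\varphi^\mM(y) \to \varphi) \land (\varphi \to \varphi^\mM(y))$, which lies in \fKz because the syntax admits $a \to \varphi$ and $\varphi \to a$ for $a \in L$. Then $\chi_\varphi^\mM(y) = 1$ and, for every $y' \in \Delta^\mMp$, $\chi_\varphi^\mMp(y') = \varphi^\mM(y) \fequiv \varphi^\mMp(y')$. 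Put $\Gamma = \{\chi_\varphi \mid \varphi \text{ is a formula of \fKz}\}$, which is infinite since the map $\varphi \mapsto \chi_\varphi$ is syntactically injective. For any finite $\Lambda \subseteq \Gamma$, the formula $\bigwedge\Lambda$ evaluates to $1$ at $y$, so $(\tuple{\varrho}\bigwedge\Lambda)^\mM(x) \geq \varrho^\mM(x,y)$. Applying Lemma~\ref{lemma: GDHAW 2} to $\tuple{\varrho}\bigwedge\Lambda$ and invoking \eqref{fop: GDJSK 00} gives $Z(x,x') \fand (\tuple{\varrho}\bigwedge\Lambda)^\mM(x) \leq (\tuple{\varrho}\bigwedge\Lambda)^\mMp(x')$, hence $a \leq (\tuple{\varrho}\bigwedge\Lambda)^\mMp(x')$. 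Since $\mMp$ is witnessed w.r.t.\ \fKz, there exists $y'_\Lambda \in \Delta^\mMp$ with $\varrho^\mMp(x',y'_\Lambda) \fand (\bigwedge\Lambda)^\mMp(y'_\Lambda) \geq a$, and because $(\bigwedge\Lambda)^\mMp(y'_\Lambda) \leq \chi^\mMp(y'_\Lambda)$ for every $\chi \in \Lambda$, monotonicity \eqref{fop: GDJSK 10} yields $\varrho^\mMp(x',y'_\Lambda) \fand \chi^\mMp(y'_\Lambda) \geq a$ for all $\chi \in \Lambda$. Modal saturatedness of $\mMp$, applied to the threshold $a > 0$ and the infinite set $\Gamma$, then provides a single $y' \in \Delta^\mMp$ with $\varrho^\mMp(x',y') \fand \chi_\varphi^\mMp(y') \geq a$ for every $\chi_\varphi \in \Gamma$. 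Continuity of $\fand$ finally commutes with the infimum defining $Z(y,y')$, giving $\varrho^\mMp(x',y') \fand Z(y,y') = \inf_\varphi\bigl(\varrho^\mMp(x',y') \fand (\varphi^\mM(y) \fequiv \varphi^\mMp(y'))\bigr) \geq a$, which is \eqref{eq: FB2}.

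The main obstacle lies in assembling this last step: choosing the characteristic family $\Gamma$ so that, via $\bigwedge\Lambda$, the existentially quantified bound produced by invariance and the witnessed property of $\mMp$ matches exactly the hypothesis required by modal saturatedness, and then showing that the single $y'$ delivered by saturation realizes the desired lower bound on $\varrho^\mMp(x',y') \fand Z(y,y')$. That final passage is where continuity of $\fand$ is essential; every other manipulation is a routine application of the identities collected in Lemma~\ref{lemma: JHFJW}, and the symmetric condition \eqref{eq: FB3} is handled by interchanging the roles of $\mM$ and $\mMp$.
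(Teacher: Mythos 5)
Your proof is correct and rests on the same ingredients as the paper's: characteristic formulas of the form $(\varphi \to c) \land (c \to \varphi)$ with $c = \varphi^\mM(y) \in L$, the test formula $\tuple{\varrho}\bigwedge\Lambda$, witnessedness and modal saturation of $\mMp$, and continuity of $\fand$. The only substantive difference is organizational: the paper argues by contradiction, first using continuity of $\fand$ to extract, for each $y' \in \Delta^\mMp$, a single formula $\varphi_{y'}$ violating the target inequality, and then invokes modal saturation in contrapositive form to obtain a finite offending subfamily; you instead verify the hypothesis of modal saturation directly for the full family $\{\chi_\varphi\}$ indexed by all \fKz formulas and apply continuity only at the very end. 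Both routes work; yours avoids the contradiction at the cost of quantifying over a larger index set. One citation must be repaired: in the middle of verifying \eqref{eq: FB2} you justify $Z(x,x') \fand (\tuple{\varrho}\bigwedge\Lambda)^\mM(x) \leq (\tuple{\varrho}\bigwedge\Lambda)^\mMp(x')$ by Lemma~\ref{lemma: GDHAW 2}, but that lemma presupposes that $Z$ is a fuzzy bisimulation, which is exactly what is being proved, so as written the step is circular. Fortunately the inequality you need follows immediately from the definition of $Z$ as an infimum ranging over all formulas of \fKz (in particular over $\tuple{\varrho}\bigwedge\Lambda$), together with \eqref{fop: GDJSK 00}; Lemma~\ref{lemma: GDHAW 2} is genuinely needed only for the maximality half, where you use it correctly.
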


\begin{proof}
By Lemma~\ref{lemma: GDHAW 2}, it is sufficient to prove that $Z$ is a fuzzy bisimulation between $\mM$ and $\mMp$. 

By definition, $Z$ satisfies the condition~\eqref{eq: FB1}.

We prove that $Z$ satisfies the condition \eqref{eq: FB2}. Let $\varrho \in \SA$, $x,y \in \Delta^\mM$ and $x' \in \Delta^\mMp$. Let $a = Z(x,x') \fand \varrho^\mM(x,y)$. For a contradiction, suppose that, for every $y' \in \Delta^\mMp$, \mbox{$a \not\leq \varrho^\mMp(x',y') \fand Z(y,y')$}. Since $\fand$ is continuous, by the definition of $Z(y,y')$, it follows that, for every $y' \in \Delta^\mMp$, there exists a formula $\varphi_{y'}$ of \fKz such that
\[ a \not\leq \varrho^\mMp(x',y') \fand (\varphi_{y'}^\mM(y) \fequiv \varphi_{y'}^\mMp(y')). \]
For every $y' \in \Delta^\mMp$, let 
\[ \psi_{y'} = (\varphi_{y'} \to \varphi_{y'}^\mM(y)) \land (\varphi_{y'}^\mM(y) \to \varphi_{y'}). \] 
Let $\Gamma = \{\psi_{y'} \mid y' \in \Delta^\mMp\}$.
Observe that, for every $y' \in \Delta^\mMp$, $\psi_{y'}^\mM(y) = 1$ (by~\eqref{fop: GDJSK 30}) and $a \not\leq \varrho^\mMp(x',y') \fand \psi_{y'}^\mMp(y')$. Since $\mMp$ is modally saturated, it follows that there exists a finite subset $\Psi$ of $\Gamma$ such that, for every $y' \in \Delta^\mMp$, there exists $\psi \in \Psi$ such that 
\begin{equation}
a \not\leq \varrho^\mMp(x',y') \fand \psi^\mMp(y'). \label{eq: HSKZK}
\end{equation}
Let $\varphi = \tuple{\varrho}\bigwedge\!\Psi$. It is a formula of \fKz. 
Thus, $\varphi^\mM(x) \geq \varrho^\mM(x,y)$ since $(\bigwedge\!\Psi)^\mM(y) = 1$. Since $\mMp$ is witnessed w.r.t.\ \fKz, by~\eqref{eq: HSKZK} and~\eqref{fop: GDJSK 10}, we have that $a \not\leq \varphi^\mMp(x')$, which means 
\[ Z(x,x') \fand \varrho^\mM(x,y) \not\leq \varphi^\mMp(x'). \]
Since $\varphi^\mM(x) \geq \varrho(x,y)$, by~\eqref{fop: GDJSK 10}, it follows that  
\[ Z(x,x') \fand \varphi^\mM(x) \not\leq \varphi^\mMp(x'). \]
By~\eqref{fop: GDJSK 00}, this implies that 
\[ Z(x,x') \not\leq (\varphi^\mM(x) \fto \varphi^\mMp(x')), \]
which contradicts the definition of $Z(x,x')$. 

Analogously, it can be proved that $Z$ satisfies the condition~\eqref{eq: FB3}. 
This completes the proof.
\myend
\end{proof}

\begin{corollary} \label{cor: HM 0}
Let $\mM$ be \red{an image-finite} fuzzy Kripke model. Suppose that the underlying residuated lattice \mbox{$\mL = \tuple{L, \leq, \fand, \fto, 0, 1}$} is \red{linear and} complete and $\fand$ is continuous. Then, the greatest fuzzy auto-bisimulation of $\mM$ exists and is a fuzzy equivalence relation.
\end{corollary}

\begin{proof}\markRed
Let \mbox{$Z : \Delta^\mM \times \Delta^\mM \to L$} be specified by 
\[ Z(x,x') = \inf\{\varphi^\mM(x) \fequiv \varphi^\mM(x') \mid \textrm{$\varphi$ is a formula of \fKz}\}. \] 
Since $\mM$ is image-finite, it is witnessed w.r.t.\ \fKz and modally saturated. By Theorem~\ref{theorem: HM}, $Z$ is the greatest fuzzy auto-bisimulation of~$\mM$. By Corollary~\ref{cor: HFHSJ}, $Z$ is a fuzzy equivalence relation. 
\myend
\end{proof}

\begin{corollary} \label{cor: HM}
Let $\mM$ and $\mMp$ be fuzzy Kripke models that are witnessed w.r.t.\ \fPDLwP and modally saturated. 
Suppose that the underlying residuated lattice \mbox{$\mL = \tuple{L, \leq, \fand, \fto, 0, 1}$} is complete, satisfies the conditions~\eqref{req: HFJSK 1} and~\eqref{req: HFJSK 2}, and $\fand$ is continuous. 
Then, for every $x \in \Delta^\mM$ and $x' \in \Delta^\mMp$, 
\begin{eqnarray*}
\!\!\!\!\!&&  
\inf\{\varphi^\mM(x) \fequiv \varphi^\mMp(x') \mid \textrm{$\varphi$ is a formula of \fKz}\} \\ 
\!\!\!\!\!&\!=\!& 
\inf\{\varphi^\mM(x) \fequiv \varphi^\mMp(x') \mid \textrm{$\varphi$ is a formula of \fPDLwP}\}. 
\end{eqnarray*}
\end{corollary}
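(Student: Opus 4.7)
The plan is to derive the equality by combining the Hennessy--Milner property (Theorem~\ref{theorem: HM}) with the invariance result (Theorem~\ref{theorem: invariance}). Denote by $A$ the left-hand side infimum (over \fKz-formulas) and by $B$ the right-hand side infimum (over \fPDLwP-formulas).

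The inequality $B \leq A$ is immediate. Every formula of \fKz uses only the constructors constants $a$, propositions $p$, conjunction $\land$, the restricted implications $a \to \varphi$ and $\varphi \to a$, and $\tuple{\varrho}\varphi$ with $\varrho$ atomic. All of these are available in \fPDLwP for any choice of $\Phi$, so each \fKz-formula is also a \fPDLwP-formula. Thus the set taken in $B$ contains the set taken in $A$, giving $B \leq A$.

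For the reverse inequality $A \leq B$, I would first observe that the hypothesis that $\mM$ and $\mMp$ are witnessed w.r.t.\ \fPDLwP implies that they are witnessed w.r.t.\ \fKz, since every \fKz-formula/program is a \fPDLwP-formula/program. Together with the completeness of $\mL$, the continuity of $\fand$, and the modal saturatedness of $\mM$ and $\mMp$, this allows Theorem~\ref{theorem: HM} to apply: the fuzzy relation $Z$ defined by $Z(x,x') = A$ is the greatest fuzzy bisimulation between $\mM$ and $\mMp$. Now the conditions~\eqref{req: HFJSK 1} and~\eqref{req: HFJSK 2} are in force, so Theorem~\ref{theorem: invariance} guarantees that every formula $\psi$ of \fPDLwP is invariant under fuzzy bisimulations w.r.t.\ \fPDLwP; applied to $Z$ (and using again that $\mM, \mMp$ are witnessed w.r.t.\ \fPDLwP), this yields $Z(x,x') \leq (\psi^\mM(x) \fequiv \psi^\mMp(x'))$ for every such $\psi$. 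Taking the infimum over $\psi$ in \fPDLwP gives $A = Z(x,x') \leq B$, and combining the two inequalities yields the desired equality.

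There is no real obstacle: the corollary is essentially the composition of the Hennessy--Milner theorem (identifying $A$ with the greatest fuzzy bisimulation) with the invariance theorem (telling us that such a bisimulation already controls all \fPDLwP-formulas, not merely \fKz-ones). The only bookkeeping needed is verifying that each hypothesis of the two invoked theorems is supplied by the corollary's hypotheses, in particular the transfer of witnessedness from \fPDLwP to the smaller language \fKz.
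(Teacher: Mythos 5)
Your proof is correct and follows essentially the same route as the paper, which derives the corollary immediately from Theorem~\ref{theorem: HM} (identifying the \fKz-infimum with the greatest fuzzy bisimulation) and Lemma~\ref{lemma: GDHAW} (of which Theorem~\ref{theorem: invariance}, the form you invoke, is a direct repackaging). The bookkeeping you supply, in particular the transfer of witnessedness from \fPDLwP to \fKz, is exactly what the paper leaves implicit.
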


%This corollary follows immediately from Theorem~\ref{theorem: HM} and Lemma~\ref{lemma: GDHAW}. 

\begin{proof}\markRed
Since \fKz is a sublanguage of \fPDLwP, it is sufficient to show that 
\begin{eqnarray*}
	\!\!\!\!\!&&  
	\inf\{\varphi^\mM(x) \fequiv \varphi^\mMp(x') \mid \textrm{$\varphi$ is a formula of \fKz}\} \\ 
	\!\!\!\!\!&\! \leq \!& 
	\inf\{\varphi^\mM(x) \fequiv \varphi^\mMp(x') \mid \textrm{$\varphi$ is a formula of \fPDLwP}\}. 
\end{eqnarray*}

Let \mbox{$Z : \Delta^\mM \times \Delta^\mMp \to L$} be specified by 
\[ Z(x,x') = \inf\{\varphi^\mM(x) \fequiv \varphi^\mMp(x') \mid \textrm{$\varphi$ is a formula of \fKz}\}. \] 
By Theorem~\ref{theorem: HM}, $Z$ is the greatest fuzzy bisimulation between $\mM$ and~$\mMp$.	
Let $x \in \Delta^\mM$, $x' \in \Delta^\mMp$ and let $\varphi$ be an arbitrary formula of \fPDLwP. 
By Lemma~\ref{lemma: GDHAW}, \mbox{$Z(x,x') \leq (\varphi^\mM(x) \fequiv \varphi^\mMp(x'))$}. 
Therefore, 
\[
	Z(x,x') \leq  \inf\{\varphi^\mM(x) \fequiv \varphi^\mMp(x') \mid \textrm{$\varphi$ is a formula of \fPDLwP}\}.
\]
This completes the proof. 
\myend
\end{proof}

%===============================================================================
\section{Related Work \red{and Discussion}}
\label{sec: rw}

The works~\cite{ai/FanL14,Fan15,FSS2020} on fuzzy bisimulations have been briefly discussed in the introduction. 
We give below some additional remarks on these works before discussing other works related to logical characterizations of fuzzy/crisp bisimulations or simulations. 

The results on fuzzy bisimulations of~\cite{ai/FanL14} are formulated only for finite social networks over the residuated lattice $[0,1]$ using the G\"odel t-norm. In that work, Fan and Liau did consider extending their results on fuzzy bisimulations to the settings with the {\L}ukasiewicz and product t-norms. However, in~\cite[Example~2]{ai/FanL14} they claimed that the extension does not work. The problem with that claim is that the authors used the logical language with the additional conjunction~$\&$ which is interpreted as~$\fand$ \red{(see Remark~\ref{remark: HDJHS})}. 
In \cite{ai/FanL14} Fan and Liau also studied crisp bisimulations under the name ``generalized regular equivalence relations'' for finite weighted social networks. They provided logical characterizations for crisp bisimulations under the G\"odel, {\L}ukasiewicz and product semantics. The characterizations are formulated w.r.t.\ fuzzy multimodal logics possibly with converse, which are extended with involutive negation and/or the Baaz projection operator. They concern invariance of modal formulas under crisp bisimulations and the Hennessy-Milner property of crisp bisimulations.

In~\cite{Fan15} Fan also studied crisp bisimulations for fuzzy monomodal logics under the G\"odel semantics. She provided logical characterizations of such bisimulations in the basic fuzzy monomodal logics possibly with converse, which are extended with involutive negation and/or the Baaz projection operator. The results of~\cite{Fan15} on invariance of modal formulas and the Hennessy-Milner property for both crisp and fuzzy bisimulations are formulated for image-finite fuzzy Kripke models over a~signature with only one accessibility relation. 

In~\cite{FSS2020} Nguyen {\em et al}.\ also provided logical characterizations of crisp bisimulations for fuzzy description logics under the G{\"o}del semantics. For the case with such bisimulations, the considered logics are extended with the Baaz projection operator or involutive negation. Apart from results on invariance of concepts and the Hennessy-Milner property of crisp/fuzzy bisimulations, the work~\cite{FSS2020} also gives results on conditional invariance of TBoxes and ABoxes under crisp/fuzzy bisimulations, separation of the expressive power of fuzzy description logics, and minimization of fuzzy interpretations by using crisp bisimulations.

\red{When restricting to invariance results and the Hennessy-Milner property of fuzzy bisimulations in logics that are sublogics of \fPDL, our results are more general than the results of \cite{ai/FanL14,Fan15,FSS2020}. For example, Theorem~\ref{theorem: HM} together with Corollary~\ref{cor: HM} is strictly more general than Theorem~3 of~\cite{Fan15}. One of the reasons is that Theorem~\ref{theorem: HM} and Corollary~\ref{cor: HM} are formulated for complete residuated lattices, while Theorem~3 of~\cite{Fan15} is formulated only for the lattice $[0,1]$ using the G\"odel t-norm. The other reasons are that: the logic considered in Theorem~3 of~\cite{Fan15} is a proper sublogic of \fPDLwP; Theorem~\ref{theorem: HM} and Corollary~\ref{cor: HM} are formulated for witnessed and modally saturated models, while Theorem~3 of~\cite{Fan15} is formulated only for image-finite models. On the other hand, all the papers \cite{ai/FanL14,Fan15,FSS2020} contain results formulated for modal/description logics with converse/inverse and/or the Baaz projection operator, which are not studied in the current work. The paper \cite{FSS2020} also allows other constructors of description logics.}

In~\cite{DBLP:journals/fss/WuD16,DBLP:journals/ijar/WuCHC18,DBLP:journals/fss/WuCBD18}, Wu {\em et al}.\ provided logical characterizations of crisp bisimulations/simulations for a few variants of fuzzy transition systems. The results are formulated w.r.t.\ crisp Hennessy-Milner logics, which use values from the unit interval $[0,1]$ as thresholds for modal operators.

In \cite{DBLP:journals/ijar/PanC0C14} Pan {\em et at.}\ provided logical characterizations of fuzzy simulations for finite fuzzy labeled transition systems over finite residuated lattices. They are formulated w.r.t.\ an existential Hennessy-Milner logic. 
In~\cite{DBLP:journals/ijar/PanLC15} Pan {\em et at.}\ provided logical characterizations of simulations for finite quantitative transition systems over finite Heyting algebras. Quantitative transition systems are transition systems without labels for states but extended with a fuzzy equality relation between actions. Simulations studied in~\cite{DBLP:journals/ijar/PanLC15} are either fuzzy simulations or crisp simulations parameterized by a threshold used as a cut for the fuzzy equality relation between actions. The logical characterizations of simulations provided in~\cite{DBLP:journals/ijar/PanLC15} are formulated w.r.t.\ an existential cut-based crisp Hennessy-Milner logic for the case of crisp simulations, and w.r.t.\ an existential fuzzy Hennessy-Milner logic for the case of fuzzy simulations. 

In~\cite{Nguyen-TFS2019} we provided logical characterizations of crisp cut-based simulations and bisimilarity for a large class of fuzzy description logics under the Zadeh semantics. The results concern preservation of information by such simulations, conditional invariance of ABoxes and TBoxes under bisimilarity between witnessed interpretations, as well as the Hennessy-Milner property for fuzzy description logics under the Zadeh semantics.

\red{In~\cite{aml/MartiM18} Marti and Metcalfe studied logical characterizations of {\em crisp} bisimulations in chain-based modal logics. The considered logics are monomodal logics whose formulas are interpreted in many-valued Kripke models over a chain-based algebra with a crisp frame. A chain-based algebra is a linear and complete bounded lattice. The main results of~\cite{aml/MartiM18} concern characterizations of classes of Kripke models that have (resp.\ do not have) the Hennessy-Milner property.}  

\red{The work~\cite{fuin/Diaconescu20} by Diaconescu concerns logical characterizations of {\em crisp} bisimulations in fuzzy modal logics over complete MTL-chains. A complete MTL-chain is a linear and complete residuated lattice. The considered logics are fuzzy monomodal logics that allow many-valued formulas and accessibility relations. The main result of~\cite{fuin/Diaconescu20} gives a necessary and sufficient algebraic condition for the class of image-finite Kripke models for such logics to admit the Hennessy-Milner property.}

\section{Conclusions}
\label{sec: conc}

We have provided and proved logical characterizations of fuzzy bisimulations in the fuzzy propositional dynamic logic \fPDL and its sublogics over residuated lattices. The results concern invariance of formulas under fuzzy bisimulations and the Hennessy-Milner property of fuzzy bisimulations. The first theorem is formulated for fuzzy Kripke models that are witnessed, whereas the second theorem is formulated for fuzzy Kripke models that are witnessed and modally saturated. 

Our results can be reformulated for other fuzzy structures such as fuzzy labeled transition systems and fuzzy interpretations in description logics. 
It is worth emphasizing that our results concern fuzzy bisimulations over {\em general residuated lattices}. They are interesting from the theoretical point of view, as the previous results on fuzzy bisimulations are formulated and proved only for the residuated lattice $[0,1]$ using the G\"odel t-norm or Heyting algebras. 

In certain applications, the product t-norm is more suitable than the G\"odel t-norm. For example, the closeness of a person to his/her great-grandmother can be assumed to be smaller than the closeness of that person to his/her mother. Furthermore, the product residuum is continuous w.r.t.\ both the arguments, whereas the G\"odel residuum is not. This causes that the product residuum is more resistant to noise than the G\"odel residuum. 
Our logical characterizations of fuzzy bisimulations open the way for studying logical similarity between individuals and concept learning in fuzzy description logics under the product semantics by applying fuzzy bisimulations.  

On the technical matters, our results are formulated on a general level. 
Residuated lattices considered in this work may be infinite, whereas the work~\cite{DBLP:journals/ijar/PanC0C14} considers only finite residuated lattices. The class of fuzzy Kripke models that are witnessed and modally saturated is larger than the class of image-finite fuzzy Kripke models studied in~\cite{EleftheriouKN12,Fan15}, the class of finite weighted social networks studied in~\cite{ai/FanL14} and the class of finite fuzzy labeled transition systems studied in~\cite{DBLP:journals/ijar/PanC0C14,DBLP:journals/ijar/PanLC15}. The considered fuzzy logic \fPDL contains the program constructors of propositional dynamic logic, which are absent in~\cite{EleftheriouKN12,ai/FanL14,Fan15,DBLP:journals/ijar/PanC0C14,DBLP:journals/ijar/PanLC15}. They correspond to role constructors in description logics.

{\markRed
\section*{Acknowledgments}

The author would like to thank the anonymous reviewers for very helpful comments. 
}

%===============================================================================

\bibliography{BSfDL}
\bibliographystyle{elsarticle-harv}

%===============================================================================
\appendix

\section{Proof of Lemma~\ref{lemma: JHFJW}}
\label{appendix A}

%\begin{proof}[of Lemma~\ref{lemma: JHFJW}]
	Note that $\fand$ is commutative and associative. Let $x,x',y,y',z \in L$. 
	\begin{itemize}
		\item Since $\fand$ is commutative, to prove~\eqref{fop: GDJSK 10}, it is sufficient to show that, if $x \leq x'$, then $x \fand y \leq x' \fand y$. Assume that $x \leq x'$. By~\eqref{fop: GDJSK 00}, $x' \leq (y \fto (x' \fand y))$. Hence, $x \leq (y \fto (x' \fand y))$. By~\eqref{fop: GDJSK 00}, it follows that $x \fand y \leq x' \fand y$, which completes the proof of~\eqref{fop: GDJSK 10}.  
		
		\item Consider the assertion~\eqref{fop: GDJSK 20} and assume that $x' \leq x$ and $y \leq y'$. By~\eqref{fop: GDJSK 10}, $(x \fto y) \fand x' \leq$ \mbox{$(x \fto y) \fand x$}. By~\eqref{fop: GDJSK 00}, $(x \fto y) \fand x \leq y$. Hence, $(x \fto y) \fand x' \leq y$, which implies $(x \fto y) \leq$ $(x' \fto y)$ by using~\eqref{fop: GDJSK 00}. 
		By~\eqref{fop: GDJSK 00}, $(x' \fto y) \fand x'\leq y$. Since $y \leq y'$, it follows that $(x' \fto y) \fand x'\leq y'$, which implies $(x' \fto y) \leq (x' \fto y')$ by using~\eqref{fop: GDJSK 00}. We have proved that $(x \fto y) \leq (x' \fto y)$ and $(x' \fto y) \leq (x' \fto y')$, which together imply~\eqref{fop: GDJSK 20}. 
		
		\item Consider the assertion~\eqref{fop: GDJSK 30}. We have that $x \leq y$ iff $1 \fand x \leq y$ iff $1 \leq (x \fto y)$, by using~\eqref{fop: GDJSK 00}. The last inequality implies~\eqref{fop: GDJSK 30}.  
		
		\item Since $0 \leq (x \fto 0)$, by~\eqref{fop: GDJSK 00}, $0 \fand x \leq 0$. Hence, $x \fand 0 = 0 \fand x \leq 0$ and the assertion~\eqref{fop: GDJSK 40} holds. 
		
		\item Consider the assertion~\eqref{fop: GDJSK 50}. By~\eqref{fop: GDJSK 10}, $x \fand y \leq x \fand (y \lor z)$ and $x \fand z \leq x \fand (y \lor z)$. Hence, $x \fand y \lor x \fand z \leq x \fand (y \lor z)$. It remains to prove the converse. By~\eqref{fop: GDJSK 00}, $y \leq (x \fto x \fand y)$. By~\eqref{fop: GDJSK 20}, it follows that $y \leq (x \fto x \fand y \lor x \fand z)$. Similarly, it can be shown that $z \leq (x \fto x \fand y \lor x \fand z)$. Hence, $y \lor z \leq (x \fto x \fand y \lor x \fand z)$. By~\eqref{fop: GDJSK 00}, it follows that $x \fand (y \lor z) \leq x \fand y \lor x \fand z$. This completes the proof of~\eqref{fop: GDJSK 50}. 
		
		\item The assertion~\eqref{fop: GDJSK 60} follows from~\eqref{fop: GDJSK 00} and the commutativity of~$\fand$.
		
		\item Consider the assertions~\eqref{fop: GDJSK 70} and~\eqref{fop: GDJSK 70a}. By~\eqref{fop: GDJSK 60} and~\eqref{fop: GDJSK 10}, $x \fand (x \fto y) \fand (y \fto z) \leq y \fand (y \fto z) \leq z$. 
		Hence, $x \fand (y \fto z) \fand (x \fto y) \leq z$. This implies~\eqref{fop: GDJSK 70}, by using~\eqref{fop: GDJSK 00}. The assertion~\eqref{fop: GDJSK 70a} follows from~\eqref{fop: GDJSK 70}, by using~\eqref{fop: GDJSK 10}. 
		
		\item Consider the assertion~\eqref{fop: GDJSK 80}. We need to prove that 
		\begin{eqnarray}
		\!\!\!\!\!\!\!\!\!\! x \fand (y \fequiv z) & \leq & (y \fto x \fand z) \label{fop: GDJSK 80a} \\
		\!\!\!\!\!\!\!\!\!\! x \fand (y \fequiv z) & \leq & (x \fand z \fto y) \label{fop: GDJSK 80b}
		\end{eqnarray}
		By~\eqref{fop: GDJSK 00}, $(y \fto z) \fand y \leq z$. By~\eqref{fop: GDJSK 10}, it follows that $x \fand (y \fto z) \fand y \leq x \fand z$. By~\eqref{fop: GDJSK 00}, it follows that $x \fand (y \fto z) \leq (y \fto x \fand z)$. This implies~\eqref{fop: GDJSK 80a}, by using~\eqref{fop: GDJSK 10}. Consider the assertion~\eqref{fop: GDJSK 80b}. By~\eqref{fop: GDJSK 00}, $(z \fto y) \fand z \leq y$. By~\eqref{fop: GDJSK 10}, it follows that $x \fand (z \fto y) \fand x \fand z \leq y$. By~\eqref{fop: GDJSK 00}, it follows that $x \fand (z \fto y) \leq (x \fand z \fto y)$. This implies~\eqref{fop: GDJSK 80b}, by using~\eqref{fop: GDJSK 10}.   
		
		\item Consider the assertions~\eqref{fop: GDJSK 90}--\eqref{fop: GDJSK 110}. By~\eqref{fop: GDJSK 00}, $(x \fto (y \fto z)) \fand x \leq (y \fto z)$ and $(y \fto z) \fand y \leq z$. By~\eqref{fop: GDJSK 10}, it follows that 
		\begin{equation}
		(x \fto (y \fto z)) \fand y \fand x \leq z. \label{eq: JROSJ}
		\end{equation}
		By applying~\eqref{fop: GDJSK 00} twice, this implies \mbox{$(x \fto (y \fto z)) \leq (y \fto (x \fto z))$}, which in turn implies~\eqref{fop: GDJSK 90}. The assertion~\eqref{eq: JROSJ} also implies~\eqref{fop: GDJSK 100}, by using~\eqref{fop: GDJSK 00} and the commutativity of~$\fand$. The assertion~\eqref{fop: GDJSK 110} follows from~\eqref{fop: GDJSK 100}, by using~\eqref{fop: GDJSK 20}.  
		
		\item Consider the assertion~\eqref{fop: GDJSK 115}. By~\eqref{fop: GDJSK 60} and~\eqref{fop: GDJSK 10}, we have that $x \fand (x \fto y) \fand (y \fto z) \leq z$. The assertion~\eqref{fop: GDJSK 115} follows from this, using~\eqref{fop: GDJSK 00} and the commutativity of~$\fand$.   
		
		\item The assertion~\eqref{fop: GDJSK 115a} follows from~\eqref{fop: GDJSK 115}, using~\eqref{fop: GDJSK 10} and the commutativity of~$\fand$ and~$\fequiv$.  
		
		\item Due to the commutativity of $\fequiv$, to prove~\eqref{fop: GDJSK 120} it is sufficient to show that 
		\[ (x \fequiv x') \land (y \fequiv y') \leq (x \land y \fto x' \land y'). \] 
		By~\eqref{fop: GDJSK 00}, this is equivalent to $((x \fequiv x') \land (y \fequiv y')) \fand (x \land y) \leq x' \land y'$. We need to prove that 
		\begin{eqnarray}
		\!\!\!\!\!\!\!\!\!\! ((x \fequiv x') \land (y \fequiv y')) \fand (x \land y) & \leq & x' \label{fop: GDJSK x1} \\
		\!\!\!\!\!\!\!\!\!\! ((x \fequiv x') \land (y \fequiv y')) \fand (x \land y) & \leq & y'. \label{fop: GDJSK x2}
		\end{eqnarray}
		By~\eqref{fop: GDJSK 00}, $(x \fto x') \fand x \leq x'$. This implies~\eqref{fop: GDJSK x1}, by using~\eqref{fop: GDJSK 10}. Similarly, \eqref{fop: GDJSK x2} also holds. %This completes the proof of~\eqref{fop: GDJSK 120}.   
		
		\item Due to the commutativity of $\fequiv$, to prove~\eqref{fop: GDJSK 130} it is sufficient to show that 
		\[ (x \fequiv x') \land (y \fequiv y') \leq (x \lor y \fto x' \lor y'). \] 
		By~\eqref{fop: GDJSK 00}, this is equivalent to $((x \fequiv x') \land (y \fequiv y')) \fand (x \lor y) \leq x' \lor y'$. By~\eqref{fop: GDJSK 50}, it is sufficient to prove that 
		\begin{eqnarray}
		\!\!\!\!\!\!\!\!\!\! ((x \fequiv x') \land (y \fequiv y')) \fand x & \leq & x' \label{fop: GDJSK x3} \\
		\!\!\!\!\!\!\!\!\!\! ((x \fequiv x') \land (y \fequiv y')) \fand y & \leq & y'. \label{fop: GDJSK x4}
		\end{eqnarray}
		By~\eqref{fop: GDJSK 00}, $(x \fto x') \fand x \leq x'$. This implies~\eqref{fop: GDJSK x3}, by using~\eqref{fop: GDJSK 10}. Similarly, \eqref{fop: GDJSK x4} also holds.
		
		\item To prove~\eqref{fop: GDJSK 140}, it is sufficient to show that $(x \fto y) \leq ((z \fto x) \fto (z \fto y))$. By~\eqref{fop: GDJSK 00}, this is equivalent to $(x \fto y) \fand (z \fto x) \fand z \leq y$. This latter inequality holds because, by~\eqref{fop: GDJSK 60} and~\eqref{fop: GDJSK 10}, 
		\[ z \fand (z \fto x) \fand (x \fto y) \leq x \fand (x \fto y) \leq y. \]
		
		\item To prove~\eqref{fop: GDJSK 150}, it is sufficient to show that $(y \fto x) \leq ((x \fto z) \fto (y \fto z))$. By~\eqref{fop: GDJSK 00}, this is equivalent to $(y \fto x) \fand (x \fto z) \fand y \leq z$. This latter inequality holds because, by~\eqref{fop: GDJSK 60} and~\eqref{fop: GDJSK 10}, 
		\[ y \fand (y \fto x) \fand (x \fto z) \leq x \fand (x \fto z) \leq z. \]
		
		%\item The assertion~\eqref{fop: GDJSK 155} can be proved analogously as done for the assertion~\eqref{fop: GDJSK 50}.
		
		\item Consider the assertion~\eqref{fop: GDJSK 160} and suppose that $\mL$ is a Heyting algebra. Due to the commutativity of $\fequiv$, to prove~\eqref{fop: GDJSK 160} it is sufficient to prove that 
		\[ (x \fequiv x') \fand (y \fequiv y') \leq ((x \fto y) \fto (x' \fto y')). \] 
		By~\eqref{fop: GDJSK 10} and~\eqref{fop: GDJSK 00}, it is sufficient to prove that 
		$(x' \fto x) \fand (y \fto y') \fand (x \fto y) \fand x' \leq y'$. 
		This holds because, by~\eqref{fop: GDJSK 60} and~\eqref{fop: GDJSK 10},  
		\[ x'  \fand (x' \fto x) \fand (x \fto y) \fand (y \fto y') \leq x \fand (x \fto y) \fand (y \fto y') \leq y \fand (y \fto y') \leq y'. \]
		
		\item Consider the assertion~\eqref{fop: GDJSK 170} and suppose that $\mL$ is a Heyting algebra and $x \leq (y \fequiv z)$. Since $x \leq (y \fequiv z)$, we have that $x \leq (y \fto z)$. By~\eqref{fop: GDJSK 00}, it follows that $x \land y = x \fand y \leq z$. Hence, $x \land y \leq x \land z$ since $\land$ is idempotent. Similarly, it can also be shown that $x \land z \leq x \land y$. Therefore, $x \land y = x \land z$, which means $x \fand y = x \fand z$.
		%\myend
	\end{itemize}
%\end{proof}

%===============================================================================

\section{The Relationship with Fuzzy Bisimulations between Fuzzy Automata}
\label{appendix B}

Clearly, a fuzzy Kripke model can be treated as a fuzzy labeled transition system (\FLTS) and Definition~\ref{def: HDJAK} (which specifies fuzzy bisimulations) can be applied to \FLTSs. 
In~\cite{CiricIDB12}, {\'C}iri{\'c} {\em et al}.\ introduced a few kinds of fuzzy bisimulations (and simulations) for fuzzy automata over complete residuated lattices. Among them the one that researchers would have in mind as the default is called ``forward bisimulation''. We recall it below and simply refer to it as fuzzy bisimulation between fuzzy automata. After that we relate it to the notion of fuzzy bisimulation between fuzzy Kripke models.

In this appendix, suppose that the underlying residuated lattice $\mL$ is complete. 

Given fuzzy sets $R: X \to L$, $S: Y \to L$ and $Z: X \times Y \to L$, we define $(R \circ Z) : Y \to L$ and $(Z \circ S) : X \to L$ to be the fuzzy sets such that
\begin{eqnarray*}
(R \circ Z)(y) & = & \sup \{R(x) \fand Z(x,y) \mid x \in X\} \quad \textrm{for } y \in Y; \\
(Z \circ S)(x) & = & \sup \{Z(x,y) \fand S(y) \mid y \in Y\} \quad \textrm{ for } x \in X.
\end{eqnarray*}

A {\em fuzzy automaton} over an alphabet $\Sigma$ (and $\mL$) is a tuple $\mA = \tuple{A, \delta^\mA, \sigma^\mA, \tau^\mA}$, where $A$ is a non-empty set of states, $\delta^\mA : A \times \Sigma \times A \to L$ is the fuzzy transition function, $\sigma^\mA : A \to L$ is the fuzzy set of initial states, and $\tau^\mA : A \to L$ is the fuzzy set of terminal states. 
%\red{It is required that there exist $x_i, x_f \in A$ such that $\sigma^\mA(x_i) > 0$ and $\tau^\mA(x_f) > 0$.} 
For $\varrho \in \Sigma$, by $\delta_\varrho^\mA$ we denote the fuzzy relation on $A$ such that $\delta_\varrho^\mA(x,y) = \delta^\mA(x,\varrho,y)$ for $x,y \in A$. 

A fuzzy automaton $\mA\ \red{= \tuple{A, \delta^\mA, \sigma^\mA, \tau^\mA}}$ is {\em image-finite} if:
\begin{itemize}
\item \red{the set $\{x \in A \mid \sigma^\mA(x) > 0\}$ is finite, and} 
\item for every $\varrho \in \Sigma$ and every $x \in A$, the set $\{y \in A \mid \delta_\varrho^\mA(x,y) > 0\}$ is finite.
\end{itemize}

Given fuzzy automata $\mA = \tuple{A, \delta^\mA, \sigma^\mA, \tau^\mA}$ and \red{$\mAp = \tuple{A', \delta^\mAp, \sigma^\mAp, \tau^\mAp}$} over an alphabet~$\Sigma$, a~{\em fuzzy bisimulation} (called ``forward bisimulation'' in~\cite{CiricIDB12}) between $\mA$ and \red{$\mAp$} is a fuzzy relation \mbox{$Z: A \times \red{A'} \to L$} satisfying the following conditions for all $\varrho \in \Sigma$:
\begin{eqnarray}
\sigma^\mA & \leq & \sigma^{\red{\mAp}} \circ Z^- \label{eq: HFHAJ 1} \\
Z^- \circ \delta_\varrho^\mA & \leq & \delta_\varrho^{\red{\mAp}} \circ Z^- \label{eq: HFHAJ 2} \\
Z^- \circ \tau^\mA & \leq & \tau^{\red{\mAp}} \label{eq: HFHAJ 3} \\[1ex]
\sigma^{\red{\mAp}} & \leq & \sigma^\mA \circ Z \label{eq: HFHAJ 4} \\
Z \circ \delta_\varrho^{\red{\mAp}} & \leq & \delta_\varrho^\mA \circ Z \label{eq: HFHAJ 5} \\
Z \circ \tau^{\red{\mAp}} & \leq & \tau^\mA. \label{eq: HFHAJ 6} 
\end{eqnarray}

Given a fuzzy automaton $\mA = \tuple{A,\delta^\mA,\sigma^\mA,\tau^\mA}$ over an alphabet $\Sigma$, we define the {\em fuzzy Kripke model corresponding to \red{$\mA$}} to be the fuzzy Kripke model $\mM$ over the signature $\tuple{\SA,\SP}$ with $\SA = \Sigma$ and $\SP = \{i,f\}$ such that:
\begin{itemize}
	\item $\Delta^\mM = A \cup \{s_i,s_f\}$, where $s_i$ and $s_f$ are new states;
	\item $i^\mM = \{s_i\!:\!1\}$ and $f^\mM = \{s_f\!:\!1\}$;
	\item for every $\varrho \in \SA$, $x,y \in A$ and $z \in \Delta^\mM$: 
	\begin{itemize} 
		\item $\varrho^\mM(x,y) = \delta^\mA(x,\varrho,y)$, 
		\item $\varrho^\mM(s_i,x) = \sigma^\mA(x)$ and $\varrho^\mM(x,s_f) = \tau^\mA(x)$, 
		\item $\varrho^\mM(z,s_i) = \varrho^\mM(s_f,z) = \varrho^\mM(s_i,s_f) = 0$. 
	\end{itemize}
\end{itemize}
Thus, $s_i$ (resp.\ $s_f$) stands for the new unique initial (resp.\ terminal) state; the propositions $i$ and $f$ are used to identify $s_i$ and~$s_f$, respectively. 
The given definition is a counterpart of the definition of the fuzzy interpretation (in description logic) that corresponds to a~fuzzy automaton~\cite{TFS2020}. 

Recall that, in this appendix, the underlying residuated lattice is assumed to be complete. 
The following proposition relates our notion of fuzzy bisimulation between fuzzy Kripke models to the notion of fuzzy bisimulation between fuzzy automata, which is defined and called ``forward bisimulation'' by {\'C}iri{\'c} {\em et al.}~\cite{CiricIDB12}. 
%Its proof is straightforward. 

\begin{proposition}\label{prop: HFKSS 2}
	Let $\mA = \tuple{A, \delta^\mA, \sigma^\mA, \tau^\mA}$ and $\mAp = \tuple{A', \delta^\mAp, \sigma^\mAp, \tau^\mAp}$ be fuzzy automata over the same alphabet, $\mM$ and $\mMp$ the fuzzy Kripke models corresponding to $\mA$ and $\mAp$, respectively. 
	Let $s_i,s_f \in \Delta^\mM$ and $s'_i, s'_f \in \Delta^\mMp$ be the states such that $i^\mM(s_i) = f^\mM(s_f) = i^\mMp(s'_i) = f^\mMp(s'_f) = 1$. Let $Z$ be a fuzzy relation between $A$ and $A'$, $Z_2$ the fuzzy relation between $\Delta^\mM$ and $\Delta^\mMp$ such that $Z_2 = Z \cup \{\tuple{s_i,s'_i}\!:\!1, \tuple{s_f,s'_f}\!:\!1\}$. 
	\begin{itemize}
		\item If $Z_2$ is a fuzzy bisimulation between $\mM$ and~$\mMp$, then $Z$ is a fuzzy bisimulation between $\mA$ and $\mAp$.
		\item Conversely, if the underlying residuated lattice is also linear, $\mA$ and $\mAp$ are image-finite and $Z$ is a fuzzy bisimulation between~$\mA$ and~$\mAp$, then $Z_2$ is a fuzzy bisimulation between $\mM$ and~$\mMp$.
	\end{itemize}
\end{proposition}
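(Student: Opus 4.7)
The plan is to prove both directions by case analysis on whether each coordinate of the bisimulation lies in $A \times A'$ or involves one of the four auxiliary states $s_i, s_f, s'_i, s'_f$. The key structural facts driving the case collapse are: $Z_2$ vanishes outside $A \times A'$ except at the two designated pairs $\tuple{s_i,s'_i}$ and $\tuple{s_f,s'_f}$, where it equals $1$; the only outgoing $\varrho$-transitions from $s_i$ land in $A$ with weight $\sigma^\mA$; the only incoming $\varrho$-transitions to $s_f$ come from $A$ with weight $\tau^\mA$; and there are no other $\varrho$-transitions involving the auxiliary states.

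For the first direction, assuming $Z_2$ is a fuzzy bisimulation between $\mM$ and $\mMp$, I would extract each of the six automaton bisimulation conditions by suitably instantiating \eqref{eq: FB2} or \eqref{eq: FB3}. The inequality $\sigma^\mA \leq \sigma^\mAp \circ Z^-$ comes from \eqref{eq: FB2} with $x = s_i$, $x' = s'_i$ and $y \in A$ arbitrary: any Kripke witness $y' \notin A'$ forces $\varrho^\mMp(s'_i, y') = 0$, so the only informative witnesses lie in $A'$ and yield exactly the desired bound. The inequality $Z^- \circ \tau^\mA \leq \tau^\mAp$ follows from \eqref{eq: FB2} with $x \in A$, $x' \in A'$, $y = s_f$: any witness $y' \neq s'_f$ gives $Z_2(s_f, y') = 0$, which forces $y' = s'_f$ and delivers the inequality. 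Taking $x, y \in A$ and $x' \in A'$ yields $Z^- \circ \delta_\varrho^\mA \leq \delta_\varrho^\mAp \circ Z^-$, since any witness outside $A'$ again collapses the right-hand side to zero. The three symmetric conditions come from \eqref{eq: FB3} by identical reasoning.

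For the converse, assuming $Z$ is a fuzzy bisimulation between automata with the lattice linear and $\mA, \mAp$ image-finite, I would verify the Kripke conditions for $Z_2$ directly. Condition \eqref{eq: FB1} for the propositions $i$ and $f$ is immediate from the definitions of $\mM, \mMp$ and the shape of $Z_2$. For \eqref{eq: FB2}, I would split on $\tuple{x, x'}$: if $Z_2(x, x') = 0$ the statement is trivial; if $\tuple{x,x'} = \tuple{s_f, s'_f}$ there are no outgoing $\varrho$-transitions; if $\tuple{x,x'} = \tuple{s_i, s'_i}$ and $y \in A$ invoke $\sigma^\mA \leq \sigma^\mAp \circ Z^-$; if $\tuple{x, x'} \in A \times A'$ further split on $y$, using $Z^- \circ \delta_\varrho^\mA \leq \delta_\varrho^\mAp \circ Z^-$ when $y \in A$, taking $y' = s'_f$ and invoking $Z^- \circ \tau^\mA \leq \tau^\mAp$ when $y = s_f$, and using $\varrho^\mM(x, s_i) = 0$ when $y = s_i$. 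Condition \eqref{eq: FB3} is analogous.

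The main obstacle will be extracting an actual existential witness from the sup-inequalities supplied by the automaton bisimulation. Linearity of the lattice turns any finite supremum into a maximum, so attainment is guaranteed whenever the candidate set is finite. Image-finiteness of $\mAp$ provides exactly this in the $\delta_\varrho$-subcase, where the candidate witnesses may be restricted to $\{y' \in A' \mid \delta_\varrho^\mAp(x', y') > 0\}$; by symmetry, image-finiteness of $\mA$ handles the opposite side via \eqref{eq: FB3}. For the $\sigma$-subcase arising at $\tuple{s_i, s'_i}$, the relevant candidate set is the support of $\sigma^\mAp$; the proof therefore relies on reading the image-finiteness hypothesis as implicitly covering the initial fuzzy set as well, which is the natural extension consistent with viewing $s_i$ as an additional source state of the encoded Kripke model.
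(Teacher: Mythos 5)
The paper does not actually supply a proof of this proposition (it only says the proof is straightforward), so there is nothing to compare routes against; your case analysis is the proof the author omits, and it is correct. Both directions check out: in the forward direction the auxiliary states force the Kripke witnesses into $A'$ (resp.\ $A$) exactly as you describe, because $\varrho^\mMp(s'_i,s'_i)=\varrho^\mMp(s'_i,s'_f)=0$ and $Z_2$ vanishes off $A\times A'$ except at the two designated pairs; in the converse direction your use of linearity plus image-finiteness to turn the sup-inequality $Z^-\circ\delta^\mA_\varrho\le\delta^\mAp_\varrho\circ Z^-$ into an attained witness is the same mechanism that underlies the converse half of Proposition~\ref{prop: HFKSS}.

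The one substantive point is the $\sigma$-subcase at $\tuple{s_i,s'_i}$, and you are right to single it out: it is not a defect of your argument but of the hypotheses as literally stated. The paper's definition of image-finiteness for fuzzy automata constrains only $\delta^\mA$, so the support of $\sigma^\mAp$ may be infinite, and then the sup in $\sigma^\mA\le\sigma^\mAp\circ Z^-$ need not be attained. This is not merely a gap in the argument --- the converse direction is false without the extra assumption. For instance, over $[0,1]$ with the G\"odel t-norm, take $A=\{y\}$, $A'=\{y'_n\mid n\ge 1\}$, $\sigma^\mA(y)=1$, $\sigma^\mAp(y'_n)=1-1/n$, $Z(y,y'_n)=1$, and $\delta$, $\tau$ identically zero: all six automaton conditions hold and both automata are image-finite in the paper's sense, yet no $y'$ satisfies $1\le\varrho^\mMp(s'_i,y')\fand Z_2(y,y')$, so $Z_2$ violates~\eqref{eq: FB2}. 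Your reading --- that image-finiteness should also require $\sigma^\mA$ and $\sigma^\mAp$ to have finite support --- is exactly the condition needed to make the corresponding Kripke models image-finite (since the $\varrho$-successors of $s_i$ are precisely the support of $\sigma^\mA$), and it restores consistency with the hypotheses of Proposition~\ref{prop: HFKSS}. State that strengthening explicitly as an amendment to the hypotheses rather than as an implicit reading, and your proof is complete.
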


This proposition uses conditions similar to the ones of Proposition~\ref{prop: HFKSS}. The reason is that the definition of fuzzy bisimulations between fuzzy automata uses conditions similar to \eqref{eq: ERJSJ 1}--\eqref{eq: ERJSJ 3}. 

\begin{proof}\markRed
Suppose that $Z_2$ is a fuzzy bisimulation between $\mM$ and~$\mMp$. We show that $Z$ is a fuzzy bisimulation between $\mA$ and $\mAp$. Let $\varrho \in \Sigma$. We prove that $Z$ satisfies the conditions~\eqref{eq: HFHAJ 1}--\eqref{eq: HFHAJ 3}. The proof of that $Z$ satisfies the conditions~\eqref{eq: HFHAJ 4}--\eqref{eq: HFHAJ 6} is similar and omitted.  

Consider the condition~\eqref{eq: HFHAJ 1} and let $y \in A$. We need to prove that 
\begin{equation}
\sigma^\mA(y) \leq (\sigma^\mAp \circ Z^-)(y). \label{eq: NCBAA} 
\end{equation}
Since $Z_2$ is a fuzzy bisimulation between $\mM$ and~$\mMp$, by the condition~\eqref{eq: FB2} for $Z_2$, 
\[ \E y' \in \Delta^\mMp\ (Z_2(s_i,s'_i) \fand \varrho^\mM(s_i,y) \leq \varrho^\mMp(s'_i,y') \fand Z_2(y,y')). \]
By the assumptions about $Z_2$, $\mM$ and $\mM'$, this means that there exists $y' \in \Delta^\mMp$ such that  
\begin{equation}
\sigma^\mA(y) \leq \varrho^\mMp(s'_i,y') \fand Z_2(y,y'). \label{eq: HKAJA}
\end{equation}
Since $y \in A$, if $y' \notin A'$, then $Z_2(y,y') = 0$, and by~\eqref{fop: GDJSK 40}, \eqref{eq: HKAJA} implies~\eqref{eq: NCBAA}. If $y' \in A'$, then $\varrho^\mMp(s'_i,y') = \sigma^\mAp(y')$, $Z_2(y,y') = Z(y,y')$ and \eqref{eq: HKAJA} also implies~\eqref{eq: NCBAA}.

Consider the condition~\eqref{eq: HFHAJ 2}. Let $x' \in A'$ and $y \in A$. We need to prove that 
\[ (Z^- \circ \delta_\varrho^\mA)(x',y) \leq (\delta_\varrho^\mAp \circ Z^-)(x',y). \]
Let $x$ be an arbitrary element of $A$. It is sufficient to show that  
\begin{equation}
Z(x,x') \fand \delta_\varrho^\mA(x,y) \leq (\delta_\varrho^\mAp \circ Z^-)(x',y). \label{eq: HDJAA}
\end{equation}
Since $Z_2$ is a fuzzy bisimulation between $\mM$ and~$\mMp$, by the condition~\eqref{eq: FB2} for $Z_2$, 
\[ \E y' \in \Delta^\mMp\ (Z_2(x,x') \fand \varrho^\mM(x,y) \leq \varrho^\mMp(x',y') \fand Z_2(y,y')). \]
By the assumptions about $Z_2$, $\mM$ and $\mM'$, this means that there exists $y' \in \Delta^\mMp$ such that  
\begin{equation}
Z(x,x') \fand \delta_\varrho^\mA(x,y) \leq \varrho^\mMp(x',y') \fand Z_2(y,y'). \label{eq: HGJAQ}
\end{equation}
Since $y \in A$, if $y' \notin A'$, then $Z_2(y,y') = 0$, and by~\eqref{fop: GDJSK 40}, \eqref{eq: HGJAQ} implies~\eqref{eq: HDJAA}. If $y' \in A'$, then $\varrho^\mMp(x',y') = \delta_\varrho^\mAp(x',y')$, $Z_2(y,y') = Z(y,y')$ and \eqref{eq: HGJAQ} also implies~\eqref{eq: HDJAA}.

Consider the condition~\eqref{eq: HFHAJ 3} and let $x' \in A'$. We need to prove that 
\[
	(Z^- \circ \tau^\mA)(x') \leq \tau^\mAp(x'). \label{eq: HJWKA}
\]
Let $x$ be an arbitrary element of $A$. It is sufficient to show that  
\begin{equation}
Z(x,x') \fand \tau^\mA(x) \leq \tau^\mAp(x'). \label{eq: OEHQF}
\end{equation}
Since $Z_2$ is a fuzzy bisimulation between $\mM$ and~$\mMp$, by the condition~\eqref{eq: FB2} for $Z_2$, 
\[ \E y' \in \Delta^\mMp\ (Z_2(x,x') \fand \varrho^\mM(x,s_f) \leq \varrho^\mMp(x',y') \fand Z_2(s_f,y')). \]
By the assumptions about $Z_2$, $\mM$ and $\mM'$, this means that there exists $y' \in \Delta^\mMp$ such that  
\begin{equation}
Z(x,x') \fand \tau^\mA(x) \leq \varrho^\mMp(x',y') \fand Z_2(s_f,y'). \label{eq: RHSHQ}
\end{equation}
If $y' \neq s'_f$, then $Z_2(s_f,y') = 0$, and by~\eqref{fop: GDJSK 40}, \eqref{eq: RHSHQ} implies~\eqref{eq: OEHQF}. If $y' = s'_f$, then $\varrho^\mMp(x',y') = \tau^\mAp(x')$, $Z_2(s_f,y') = 1$ and \eqref{eq: RHSHQ} also implies~\eqref{eq: OEHQF}.

For the converse, suppose that the underlying residuated lattice is linear, $\mA$ and $\mAp$ are image-finite and $Z$ is a fuzzy bisimulation between~$\mA$ and~$\mAp$. We show that $Z_2$ is a fuzzy bisimulation between $\mM$ and~$\mMp$. Recall that $\SP = \{i,f\}$. Let $p \in \SP$ and $\varrho \in \SA$. 
We need to prove that the following assertions hold for all possible values of the free variables:
\begin{eqnarray}
\!\!\!\!\!\!\!\!\!\!&& Z_2(x,x') \leq (p^\mM(x) \fequiv p^\mMp(\red{x'})) \label{eq: FB1x} \\
\!\!\!\!\!\!\!\!\!\!&& \E y' \in \Delta^\mMp\ (Z_2(x,x') \fand \varrho^\mM(x,y) \leq \varrho^\mMp(x',y') \fand Z_2(y,y')) \label{eq: FB2x} \\
\!\!\!\!\!\!\!\!\!\!&& \E y \in \Delta^\mM\ (Z_2(x,x') \fand \varrho^\mMp(x',y') \leq \varrho^\mM(x,y) \fand Z_2(y,y')). \label{eq: FB3x}
\end{eqnarray}

Let $x \in \Delta^\mM$, $x' \in \Delta^\mMp$ and consider the assertion~\eqref{eq: FB1x}. 
If $\tuple{x,x'} \in A \times A'$, then $p^\mM(x) = p^\mMp(x') = 0$ and \eqref{eq: FB1x} clearly holds. 
If $Z_2(x,x') = 0$, then \eqref{eq: FB1x} also holds. 
Suppose that $\tuple{x,x'} \notin A \times A'$ and $Z_2(x,x') > 0$. 
Thus, $\tuple{x,x'} = \tuple{s_i,s'_i}$ or $\tuple{x,x'} = \tuple{s_f,s'_f}$. 
In both of these cases, for any $p \in \SP = \{i,f\}$, $p^\mM(x) = p^\mMp(x')$. Hence, \eqref{eq: FB1x} holds. 

Let $x,y \in \Delta^\mM$, $x' \in \Delta^\mMp$ and consider the assertion~\eqref{eq: FB2x}. 
If $Z_2(x,x') \fand \varrho^\mM(x,y) = 0$, then \eqref{eq: FB2x} clearly holds. Suppose that $Z_2(x,x') \fand \varrho^\mM(x,y) > 0$. By~\eqref{fop: GDJSK 40}, it follows that $Z_2(x,x') > 0$ and $\varrho^\mM(x,y) > 0$. 
There are the following cases.
\begin{itemize}
\item Case $x,y \in A$ and $x' \in A'$: We have 
\begin{equation}
Z_2(x,x') \fand \varrho^\mM(x,y) = Z(x,x') \fand \delta_\varrho^\mA(x,y). \label{eq: HFJSK 1} 
\end{equation}
By~\eqref{eq: HFHAJ 2}, 
\begin{equation}
Z(x,x') \fand \delta_\varrho^\mA(x,y) \leq (\delta_\varrho^\mAp \circ Z^-)(x',y). \label{eq: HFJSK 2}
\end{equation}
Since $\mAp$ is image-finite and the underlying residuated lattice is linear and complete, there exists $y' \in A'$ such that 
\begin{equation}
(\delta_\varrho^\mAp \circ Z^-)(x',y) = \delta_\varrho^\mAp(x',y') \fand Z(y,y') = \varrho^\mMp(x',y') \fand Z_2(y,y'). \label{eq: HFJSK 3} 
\end{equation}
The assertions~\eqref{eq: HFJSK 1}--\eqref{eq: HFJSK 3} together imply~\eqref{eq: FB2x}. 

\item Case $x = s_i$, $y \in A$ and $x' = s'_i$: We have 
\begin{equation}
Z_2(x,x') \fand \varrho^\mM(x,y) = 1 \fand \sigma^\mA(y) = \sigma^\mA(y). \label{eq: HFJSK 4} 
\end{equation}
By~\eqref{eq: HFHAJ 1}, 
\begin{equation}
\sigma^\mA(y) \leq (\sigma^\mAp \circ Z^-)(y). \label{eq: HFJSK 5}
\end{equation}
Since $\mAp$ is image-finite and the underlying residuated lattice is linear and complete, there exists $y' \in A'$ such that 
\begin{equation}
(\sigma^\mAp \circ Z^-)(y) = \sigma^\mAp(y') \fand Z(y,y') = \varrho^\mMp(x',y') \fand Z_2(y,y'). \label{eq: HFJSK 6} 
\end{equation}
The assertions~\eqref{eq: HFJSK 4}--\eqref{eq: HFJSK 6} together imply~\eqref{eq: FB2x}. 

\item Case $x \in A$, $y = s_f$ and $x' \in A'$: We have 
\begin{equation}
Z_2(x,x') \fand \varrho^\mM(x,y) = Z(x,x') \fand \tau^\mA(x). \label{eq: HFJSK 7} 
\end{equation}
By~\eqref{eq: HFHAJ 3}, 
\begin{equation}
Z(x,x') \fand \tau^\mA(x) \leq \tau^\mAp(x') = \varrho^\mMp(x',s'_f) \fand Z_2(y,s'_f). \label{eq: HFJSK 8}
\end{equation}
The assertions~\eqref{eq: HFJSK 7} and~\eqref{eq: HFJSK 8} together imply~\eqref{eq: FB2x}. 
\end{itemize}

The assertion~\eqref{eq: FB3x} can be proved analogously.
\myend
\end{proof}

\end{document}